\def\@journal@url{http://www.vmsta.org}
\def\@credit{%
  \vbox to 0pt{%
    \vskip-1.85pc
    \hskip-\textwidth
    \noindent
    \raise4mm\hbox to \textwidth{%
      \footnotesize
      \href{\@journal@url}{www.vmsta.org}%
      \hfill
      \href{\@vtex@url}{\includevtexlogo}%
      }%
    }%
  }
\newcommand{\rrvert}{\vert}
\newcommand{\llvert}{\vert}
\numberwithin{equation}{section}
\newtheorem{theorem}{Theorem}[section]
\newtheorem{lemma}[theorem]{Lemma}
\newtheorem{proposition}[theorem]{Proposition}
\newtheorem{corollary}[theorem]{Corollary}
\begin{document}

\begin{frontmatter}
\pretitle{Research Article}

\title{Cliquet option pricing with Meixner processes}

\author{\inits{M.}\fnms{Markus}~\snm{Hess}\ead[label=e1]{Markus-Hess@gmx.net}}
\address{\institution{R+V Lebensversicherung AG},
Raiffeisenplatz 2, 65189 Wiesbaden, \cny{Germany}}




\markboth{M. Hess}{Cliquet option pricing with Meixner processes}

\begin{abstract}
We investigate the pricing of cliquet options in a
geometric Meixner model. The considered option is of monthly sum cap style
while the underlying stock price model is driven by a pure-jump
Meixner--L\'{e}vy process yielding Meixner distributed log-returns. In this
setting, we infer semi-analytic expressions for the cliquet option price by
using the probability distribution function of the driving Meixner--L\'{e}vy
process and by an application of Fourier transform techniques. In an
introductory section, we compile various facts on the Meixner distribution
and the related class of Meixner--L\'{e}vy processes. We also propose a
customized measure change preserving the Meixner distribution of any
Meixner process.
\end{abstract}
\begin{keywords}
\kwd{Cliquet option pricing}
\kwd{path-dependent exotic option}
\kwd{equity indexed annuity}
\kwd{log-return of financial asset}
\kwd{Meixner distribution}
\kwd{Meixner--L\'{e}vy process}
\kwd{stochastic differential equation}
\kwd{probability measure change}
\kwd{characteristic function}
\kwd{Fourier transform}
\end{keywords}
\begin{keywords}[MSC2010]%
\kwd[Primary ]{60G51}
\kwd{60H10}
\kwd{60H30}
\kwd[; Secondary ]{91B30}
\kwd{91B70}
\end{keywords}

\begin{keywords}[JEL]%
\kwd{G22}
\kwd{D52}

\end{keywords}

\received{\sday{27} \smonth{9} \syear{2017}}
\revised{\sday{2} \smonth{1} \syear{2018}}
\accepted{\sday{21} \smonth{1} \syear{2018}}
\publishedonline{\sday{12} \smonth{2} \syear{2018}}
\end{frontmatter}

\section{Introduction}\label{sec1}

Cliquet option based contracts constitute a customized subclass of equity
indexed annuities. The underlying options commonly are of monthly sum cap
style paying a credited yield based on the sum of monthly-capped rates
associated with some reference stock index. In this regard, cliquet type
investments belong to the class of path-dependent exotic options. In \cite{15}
cliquet options are regarded as ``the height of fashion in the world of
equity derivatives''. In the literature, there are different pricing
approaches for cliquet options involving e.g. partial differential
equations (see \cite{15}), Monte Carlo techniques (see \cite{2}), numerical recursive
algorithms related to inverse Laplace transforms (see \cite{9}) and analytical
computation methods (see \cite{3,7,8}). The present article belongs to the
last category.

The aim of the present paper is to provide analytical pricing formulas for
globally-floored locally-capped cliquet options with multiple resetting
times where the underlying reference stock index is driven by a pure-jump
time-homogeneous Meixner--L\'{e}vy process. In this setup, we derive cliquet
option price formulas under two different approaches: once by using the
distribution function of the driving Meixner--L\'{e}vy process and once by
applying Fourier transform techniques (as proposed in \cite{8}). All in all, the
present article can be seen as an accompanying (but to a large degree
self-contained) paper to \cite{8}, as it presents a specific application of the
results derived in \cite{8} to the class of Meixner--L\'{e}vy processes.

The paper is organized as follows: In Section~\ref{sec2} we compile facts on the
Meixner distribution and the related class of stochastic Meixner--L\'{e}vy
processes. In Section~\ref{sec3} we introduce a geometric pure-jump stock price
model driven by a Meixner--L\'{e}vy process. In Section~\ref{sec3.1} we establish a
customized structure preserving measure change from the risk-neutral to the
physical probability measure. Section~\ref{sec4} is dedicated to the pricing of
cliquet options. We obtain semi-analytic expressions for the cliquet option
price by using the probability distribution function of the driving
Meixner--L\'{e}vy process in Section~\ref{sec4.1} and by an application of Fourier
transform techniques in Section~\ref{sec4.2}. In Section~\ref{sec5} we draw the
conclusions.

\section{A review of Meixner processes}\label{sec2}

Let $ ( \Omega, \mathbb{F},  ( \mathcal{F}_{t}  )_{t\in  [
0, T  ]}, \mathbb{Q}  )$ be a filtered probability space
satisfying the usual hypotheses, i.e. $\mathcal{F}_{t} = \mathcal{F}_{t +}
\coloneqq \cap_{s > t} \mathcal{F}_{s}$ constitutes a right-continuous
filtration and $\mathbb{F}$ denotes the sigma-algebra augmented by all
$\mathbb{Q}$-null sets (cf. p. 3 in \cite{10}). Here, $\mathbb{Q}$ is a
risk-neutral probability measure and $0< T < \infty$ denotes a finite time
horizon. In the following, we compile various facts on the Meixner
distribution and Meixner--L\'{e}vy processes from \cite{1,6,12,13} and
\cite{14}.

A real-valued, c\`{a}dl\`{a}g, pure-jump, time-homogeneous L\'{e}vy process
$M =\break  ( M_{t}  )_{t\in  [ 0, T  ]}$ (with independent and
stationary increments) satisfying $M_{0} =0$ is called Meixner (-L\'{e}vy)
process with scaling parameter $\alpha >0$, shape/skewness parameter
$\beta\in  ( -\pi, \pi  )$, peakedness parameter $\delta >0$ and
location parameter $\mu \in\mathbb{R}$, if $M_{t}$ possesses the
L\'{e}vy--It\^{o} decomposition
\begin{equation}
M_{t} = \theta t + \int_{0}^{t} \int
_{\mathbb{R}_{0}} z d \tilde{N}^{\mathbb{Q}} ( s, z )\label{eq2.1}
\end{equation}
where $\mathbb{R}_{0}\coloneqq\mathbb{R}\setminus  \{ 0  \}$,
the drift parameter
\begin{equation}
\theta\coloneqq\mu + \delta\alpha \tan ( {\beta} / {2} )\label{eq2.2}
\end{equation}
is a real-valued constant and the $\mathbb{Q}$-compensated Poisson random
measure (PRM) is given by
\begin{equation}
d \tilde{N}^{\mathbb{Q}} ( s, z ) \coloneqq dN ( s, z ) -d \nu^{\mathbb{Q}} (
z ) ds\label{eq2.3}
\end{equation}
with positive and finite Meixner-type L\'{e}vy measure
\begin{equation}
d \nu^{\mathbb{Q}} ( z ) \coloneqq\delta \frac{e^{{\beta z} /
{\alpha}}} {z \sinh  ( {\pi z} / {\alpha}  )} dz\label{eq2.4}
\end{equation}
(cf. \cite{12,13}, Eq. (3) in \cite{6}) satisfying $\nu^{\mathbb{Q}}  (
 \{ 0  \}  ) =0$ and
\[
\int_{\mathbb{R}_{0}} \bigl( 1 \wedge z^{2} \bigr) d
\nu^{\mathbb{Q}} ( z ) < \infty.
\]
We denote the L\'{e}vy triplet of $M_{t}$ by $ ( \theta,0,
\nu^{\mathbb{Q}}  )$. (Note that this notation is not entirely
consistent with \cite{6,8}.) We recall that $M_{t}$ possesses moments of all
orders (cf. Section~5.3.10 in \cite{13}). Evidently, $M_{t}$ has no Brownian
motion part. Since
\[
\int_{\mathbb{R}_{0}} \llvert z \rrvert d \nu^{\mathbb{Q}} ( z ) =
\infty
\]
the process $M_{t}$ possesses infinite variation (cf. Section~5.3.10 in
\cite{13}). We write for any fixed $t\in  [ 0, T  ]$
\[
M_{t} \sim\mathcal{M} ( \alpha, \beta, \delta t, \mu t )
\]
(cf. Section~3.6 in \cite{1}) and say that $M$ is Meixner distributed under
$\mathbb{Q}$ with parameters $\alpha$, $\beta$, $\delta$ and $\mu$. From
(\ref{eq2.1}) and (\ref{eq2.2}) we instantly receive the mean value
\begin{equation}
\mathbb{E}_{\mathbb{Q}} [ M_{t} ] = \theta t = \mu t + \delta t
\alpha \tan ( {\beta} / {2} )\label{eq2.5}
\end{equation}
standing in accordance with Eq. (11) in \cite{6}. The variance, skewness and
kurtosis of $M_{t}$ are respectively given by
\begin{align*}
\mathbh{Var}_{\mathbb{Q}} [ M_{t} ] &= \frac{\delta t} {2}
\frac{\alpha^{2}} {\cos^{2}  ( {\beta} / {2}  )},\qquad \mathbb{S}_{\mathbb{Q}} [ M_{t} ] = \sqrt{{2} / { (
\delta t )}} \sin ( {\beta} / {2} ),\\
 \mathbb{K}_{\mathbb{Q}} [ M_{t} ]
&=3+ \frac{2 - \cos  ( \beta  )} {\delta t}
\end{align*}
(cf. Table~6 in \cite{1}). Furthermore, for all $x \in\mathbb{R}$ and $t\in
 [ 0, T  ]$ the real-valued probability density function (pdf) of
$M_{t}$ under $\mathbb{Q}$ reads as
\begin{equation}
f_{M_{t}} ( x ) \coloneqq \frac{ ( 2 \cos  ( {\beta} /
{2}  )  )^{2 \delta t}} {2 \pi \alpha \varGamma  ( 2 \delta t
 )} e^{{\beta  ( x-\mu t  )} / {\alpha}} \biggl\llvert
\varGamma \biggl( \delta t + i \frac{x-\mu t} {\alpha} \biggr) \biggr\rrvert
^{2}\label{eq2.6}
\end{equation}
(cf. \cite{1,12}, Eq. (4) in \cite{6}) wherein
\[
\varGamma ( \zeta ) \coloneqq \int_{0}^{\infty}
u^{\zeta- 1} e^{-u} du
\]
denotes the gamma function which is defined for all $\zeta \in\mathbb{C}$
with $R e  ( \zeta  ) >0$. Taking the definition of the gamma
function and Euler's formula into account, we get
\begin{align*}
\varGamma \biggl( \delta t + i \frac{x-\mu t} {\alpha} \biggr) &= \int
_{0^{
+}}^{\infty} u^{\delta t- 1} e^{-u} \cos
\biggl( \frac{x-\mu t} {\alpha} \ln u \biggr) du \\
&\quad+ i \int_{0^{ +}}^{\infty}
u^{\delta t- 1} e^{-u} \sin \biggl( \frac{x-\mu t} {\alpha} \ln u \biggr) du
\end{align*}
which implies
\begin{align*}
\biggl\llvert \varGamma \biggl( \delta t + i \frac{x-\mu t} {\alpha} \biggr) \biggr
\rrvert ^{2}
&= \Biggl( \int_{0^{ +}}^{\infty} u^{\delta t- 1}
e^{-u} \cos \biggl( \frac{x-\mu t} {\alpha} \ln u \biggr) du
\Biggr)^{2}\\
&\quad + \Biggl( \int_{0^{
+}}^{\infty}
u^{\delta t- 1} e^{-u} \sin \biggl( \frac{x-\mu t} {\alpha} \ln u \biggr) du
\Biggr)^{2}.
\end{align*}
Note that the latter object appears in (\ref{eq2.6}). The cumulative distribution
function (cdf) of $M_{t}$ does not possess a closed form representation but
it can be computed numerically. Further on, the characteristic function of
$M_{t}$ can be computed by the L\'{e}vy--Khinchin formula (see e.g. \cite{4,5,11,13}) due to
\begin{equation}
\phi_{M_{t}} ( u ) \coloneqq \mathbb{E}_{\mathbb{Q}} \bigl[
e^{iu M_{t}} \bigr] = e^{\psi  ( u  ) t}\label{eq2.7}
\end{equation}
with $i^{2} = - 1$, $u \in\mathbb{R}$, $t\in  [ 0, T  ]$ and a
characteristic exponent
\begin{equation}
\psi ( u ) \coloneqq iu \biggl[ \mu + \delta\alpha \tan \biggl(
\frac{\beta} {2} \biggr) \biggr] + \delta \int_{\mathbb{R}_{0}}
\frac{e^{iuz} - 1 - iuz} {z} \frac{e^{{\beta z} / {\alpha}}} {\sinh  (
{\pi z} / {\alpha}  )} dz.\label{eq2.8}
\end{equation}
Moreover, let us define the Fourier transform, respectively inverse Fourier
transform, of a deterministic function $q\in \mathcal{L}^{1}  (
\mathbb{R}  )$ via
\[
\hat{q} ( y ) \coloneqq \int_{\mathbb{R}} q ( x ) e^{iyx}
dx, q ( x ) = \frac{1} {2 \pi} \int_{\mathbb{R}} \hat{q} ( y )
e^{-iyx} dy.
\]
Then for all $u \in\mathbb{R}$ and $t\in  [ 0, T  ]$ we receive
the well-known relationship
\[
\phi_{M_{t}} ( u ) = \hat{f}_{M_{t}} ( u )
\]
where $\hat{f}_{M_{t}}$ denotes the Fourier transform of the density
function $f_{M_{t}}$ defined in (\ref{eq2.6}). An application of the inverse
Fourier transform yields
\[
f_{M_{t}} ( x ) = \frac{1} {2 \pi} \int_{\mathbb{R}}
e^{\psi
 ( u  ) t-iux} du
\]
thanks to (\ref{eq2.7}). On the other hand, from Eq. (1) in \cite{6} we know that
\begin{equation}
\phi_{M_{t}} ( u ) = e^{iu\mu t} \biggl( \frac{\cos  (
{\beta} / {2}  )} {\cosh  ( { ( \alpha u-i\beta  )} / {2}
 )}
\biggr)^{2 \delta t}\label{eq2.9}
\end{equation}
where $u \in\mathbb{R}$ and $t\in  [ 0, T  ]$. Taking the
logarithm in (\ref{eq2.7}) and (\ref{eq2.9}), we finally deduce
\begin{equation}
\psi ( u ) = iu\mu +2 \delta \biggl[ \ln \cos \biggl( \frac{\beta}{
2} \biggr)
- \ln \cosh \biggl( \frac{\alpha u-i\beta} {2} \biggr) \biggr].\label{eq2.10}
\end{equation}
Further on, for the Meixner distribution the following properties are
well-known (cf. \cite{14}, Section~5.3.10 in \cite{13}, Section~3.6 in \cite{1}, Corollary
1 in \cite{6}).

\begin{lemma}\label{lem2.1}
\begin{enumerate}
\item[(a)] If $X \sim\mathcal{M}  ( \alpha, \beta,
\delta, \mu  )$, then $c X+m \sim\mathcal{M}  ( c\alpha,
\beta, \delta, c\mu + m  )$ with constants $c >0$
and $m \in\mathbb{R}$.
\item[(b)] If $X_{1} \sim\mathcal{ M}  ( \alpha, \beta, \delta_{1}, \mu_{1}
 )$ and $X_{2} \sim\mathcal{ M}  ( \alpha, \beta,
\delta_{2}, \mu_{2}  )$ are independent random variables,
then $X_{1} + X_{2}\sim \mathcal{M}  ( \alpha, \beta, \delta_{1} +
\delta_{2}, \mu_{1} + \mu_{2}  )$.
\item[(c)] The characteristic function $\phi_{X}  ( u; \alpha,
\beta, \delta, \mu  )$ of a Meixner distributed random
variable $X\sim \mathcal{M}  ( \alpha, \beta, \delta, \mu  )$
satisfies
\[
\phi_{X} ( u; \alpha, \beta, \delta, \mu ) = \phi_{X} ( u;
\alpha, \beta, {\delta} / {n}, {\mu} / {n} )^{n}
\]
for arbitrary $n\in \mathbb{N}$ such that the Meixner
distribution is infinitely divisible.
\end{enumerate}
\end{lemma}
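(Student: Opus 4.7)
The plan is to prove all three parts directly from the characteristic function representation (\ref{eq2.9}), which, after setting $t=1$, gives a closed form product of an exponential and a hyperbolic-cosine factor depending on $\alpha,\beta,\delta,\mu$. Since a probability law on $\mathbb{R}$ is uniquely determined by its characteristic function, computing $\phi$ for the candidate distribution in each case is enough.

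For part (a), I would start with $\phi_{cX+m}(u) = e^{ium}\,\phi_X(cu)$ (a standard property of characteristic functions for affine transformations), then substitute (\ref{eq2.9}) with argument $cu$ in place of $u$. The factor $e^{iu(c\mu)}$ combines with $e^{ium}$ to yield $e^{iu(c\mu+m)}$, while in the hyperbolic-cosine term $\alpha u$ is replaced by $(c\alpha)u$. The result is exactly the characteristic function of $\mathcal{M}(c\alpha,\beta,\delta,c\mu+m)$, which proves (a).

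For part (b), independence gives $\phi_{X_1+X_2}(u) = \phi_{X_1}(u)\,\phi_{X_2}(u)$. Plugging in (\ref{eq2.9}) for each factor, the exponentials multiply to $e^{iu(\mu_1+\mu_2)}$ and, because the base $\cos(\beta/2)/\cosh((\alpha u-i\beta)/2)$ is identical for both (same $\alpha,\beta$), the exponents $2\delta_1$ and $2\delta_2$ add to $2(\delta_1+\delta_2)$. The product is therefore the characteristic function of $\mathcal{M}(\alpha,\beta,\delta_1+\delta_2,\mu_1+\mu_2)$, proving the convolution identity in (b).

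For part (c), I would raise (\ref{eq2.9}) with parameters $(\alpha,\beta,\delta/n,\mu/n)$ to the $n$-th power; the exponential factor becomes $e^{iu\mu}$ and the exponent of the hyperbolic-cosine term becomes $2\delta$, recovering $\phi_X(u;\alpha,\beta,\delta,\mu)$. Infinite divisibility then follows because, by part (b), the $n$-fold product is itself the characteristic function of the sum of $n$ i.i.d.\ $\mathcal{M}(\alpha,\beta,\delta/n,\mu/n)$ random variables, so $X$ has the law of such a sum for every $n\in\mathbb{N}$. There is no real obstacle: the only point requiring mild care is the interpretation of $(\cdots)^{2\delta t}$ as a complex power, which is justified by choosing the principal branch of the logarithm, as is implicit in (\ref{eq2.9})--(\ref{eq2.10}).
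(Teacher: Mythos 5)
Your proposal is correct and is precisely the standard characteristic-function argument (uniqueness of characteristic functions plus the closed form (\ref{eq2.9}) at $t=1$); the paper itself offers no proof, stating the lemma as well known and citing \cite{14}, \cite{13}, \cite{1} and \cite{6}, where this is exactly the argument used. The only implicit step worth making explicit is that $\mathcal{M}(\alpha,\beta,\delta/n,\mu/n)$ is a genuine probability law for every $n$ (so that the $n$-th root of $\phi_{X}$ really is a characteristic function), which holds since the Meixner density (\ref{eq2.6}) is valid for every positive peakedness parameter.
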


\section{A stock price model driven by a Meixner process}\label{sec3}

Let $t\in  [ 0, T  ]$ and define the stochastic stock price
process $S_{t}$ via
\begin{equation}
S_{t} \coloneqq S_{0} e^{M_{t} +bt}\label{eq3.1}
\end{equation}
with deterministic initial value $S_{0}$, a constant $b \in\mathbb{R}$ and
a real-valued Meixner process
\[
M_{t} = \theta t + \int_{0}^{t} \int
_{\mathbb{R}_{0}} z d \tilde{N}^{\mathbb{Q}} ( s, z )
\]
such as introduced in (\ref{eq2.1})--(\ref{eq2.4}). Here, the constant $b$ provides some
additional degree of freedom which is introduced in order to ensure the
arbitrage-freeness of the stock price model. More details on this topic
will be given below. Verify that (\ref{eq3.1}) belongs to the same model class
(geometric L\'{e}vy models) as (2.2)--(2.3) in \cite{8}. We next introduce the
historical filtration
\[
\mathcal{F}_{t} \coloneqq\sigma \{ S_{u}:0 \leq u\leq t \}
= \sigma \{ M_{u}:0 \leq u\leq t \}.
\]
Using It\^{o}'s formula, we obtain the stochastic differential equation
(SDE)
\[
\frac{d S_{t}} {S_{t-}} = \biggl( \theta + b + \int_{\mathbb{R}_{0}} \bigl[
e^{z} - 1 - z \bigr] d \nu^{\mathbb{Q}} ( z ) \biggr) dt + \int
_{\mathbb{R}_{0}} \bigl[ e^{z} - 1 \bigr] d
\tilde{N}^{\mathbb{Q}} ( t, z )
\]
under $\mathbb{Q}$. Let us further define the discounted stock price via
\[
\hat{S}_{t} \coloneqq \frac{S_{t}} {B_{t}}
\]
where $S_{t}$ is such as defined in (\ref{eq3.1}) and $B_{t} \coloneqq e^{rt}$ is
the value of a bank account with normalized initial capital $B_{0} =1$ and
risk-less interest rate $r>0$. Due to (\ref{eq3.1}) we find
\[
\hat{S}_{t} = S_{0} e^{M_{t} +  ( b - r  ) t}
\]
while It\^{o}'s formula yields the following SDE under $\mathbb{Q}$
\[
\frac{d \hat{S}_{t}} {\hat{S}_{t-}} = \biggl( \theta + b-r + \int_{\mathbb{R}_{0}} \bigl[
e^{z} - 1 - z \bigr] d \nu^{\mathbb{Q}} ( z ) \biggr) dt + \int
_{\mathbb{R}_{0}} \bigl[ e^{z} - 1 \bigr] d
\tilde{N}^{\mathbb{Q}} ( t, z ).
\]
In accordance to no-arbitrage theory, the discounted stock price $\hat{S}$
must form a martingale under the risk-neutral probability measure
$\mathbb{Q}$. For this reason, we require the drift restriction
\begin{equation}
b =r -\theta- \int_{\mathbb{R}_{0}} \bigl[ e^{z} - 1 - z
\bigr] d \nu^{\mathbb{Q}} ( z ).\label{eq3.2}
\end{equation}
With this particular choice of the coefficient $b$, we deduce
\[
\frac{d S_{t}} {S_{t-}} = r dt + \int_{\mathbb{R}_{0}} \bigl[
e^{z} - 1 \bigr] d \tilde{N}^{\mathbb{Q}} ( t, z )
\]
under $\mathbb{Q}$. Combining (\ref{eq3.1}) and (\ref{eq3.2}), we receive
\[
S_{t} = S_{0} e^{rt} \exp \Biggl\{ \int
_{0}^{t} \int_{\mathbb{R}_{0}} z d
\tilde{N}^{\mathbb{Q}} ( s, z ) - \int_{0}^{t} \int
_{\mathbb{R}_{0}} \bigl[ e^{z} - 1 - z \bigr] d
\nu^{\mathbb{Q}} ( z ) ds \Biggr\}
\]
where the last factor on the right hand side constitutes a Dol\'{e}ans-Dade
exponential which again shows the $\mathbb{Q}$-martingale property of the
discounted stock price process $\hat{S}_{t} = S_{t} e^{- rt}$. Moreover,
taking (\ref{eq2.2}), (\ref{eq2.4}), (\ref{eq2.8}) and (\ref{eq2.10}) into account, Eq.~(\ref{eq3.2}) can be
expressed as
\begin{equation}
b=r - \psi ( - i ) =r - \mu - 2\delta \ln \biggl( \frac{\cos
 ( {\beta} / {2}  )} {\cos  ( { ( \alpha+\beta  )} /
{2}  )} \biggr).\label{eq3.3}
\end{equation}
Unless otherwise stated, from now on we assume that the constant $b \in
\mathbb{R}$ appearing in (\ref{eq3.1}) is such as given in (\ref{eq3.3}). Though
constituting an admissible choice, taking $b=0$ in (\ref{eq3.1})--(\ref{eq3.3}) might be too
restrictive in practical applications. In the following, we investigate the
log-returns related to our model (\ref{eq3.1}). For an arbitrary time step $\varDelta
>0$ and $t \leq T - \varDelta$ we obtain
\[
\ln \biggl( \frac{S_{t+ \varDelta}} {S_{t}} \biggr) \cong M_{\varDelta} +b \varDelta \sim
\mathcal{M} \bigl( \alpha,\beta,\delta \varDelta, ( \mu+b ) \varDelta \bigr)
\]
by Lemma~\ref{lem2.1} (a). Here, the symbol $\cong$ denotes equality in
distribution. Hence, in our stock price model (\ref{eq3.1}) the log-returns are
Meixner distributed. We stress that in \cite{12} it was shown that the Meixner
distribution fits empirical financial log-returns very well. Furthermore,
for $n \in\mathbb{N}$ we introduce the time partition $\mathcal{P}
\coloneqq  \{ 0< t_{0} < t_{1} < \cdots < t_{n} \leq T  \}$ and
define the return/revenue process associated with the period $ [ t_{k -
1}, t_{k}  ]$ via
\[
R_{k} \coloneqq \frac{S_{t_{k}} - S_{t_{k - 1}}} {S_{t_{k - 1}}}
\]
where $k \in  \{ 1, \dots,n  \}$. A substitution of (\ref{eq3.1}) into the
latter equation yields
\begin{equation}
R_{k} = e^{Y_{t_{k}} - Y_{t_{k - 1}}} - 1\label{eq3.4}
\end{equation}
where
\[
Y_{t} \coloneqq M_{t} +bt \sim\mathcal{ M} \bigl( \alpha,
\beta,\delta t, ( \mu+b ) t \bigr)
\]
is a Meixner--L\'{e}vy process. Taking (\ref{eq2.1})--(\ref{eq2.4}) and (\ref{eq3.3}) into account,
we get
\begin{equation}
Y_{t} =\gamma t+ \int_{0}^{t} \int
_{\mathbb{R}_{0}} z dN ( s,z )\label{eq3.5}
\end{equation}
where
\[
\gamma \coloneqq r+\delta \biggl[ \alpha \tan \biggl( \frac{\beta} {2} \biggr)
- 2 \ln \biggl( \frac{\cos  ( {\beta} / {2}  )} {\cos
 ( { ( \alpha+\beta  )} / {2}  )} \biggr) - \int_{\mathbb{R}_{0}}
\frac{e^{{\beta z} / {\alpha}}} {\sinh  ( {\pi z}
/ {\alpha}  )} dz \biggr]
\]
is a real-valued constant. Recall that the Meixner--L\'{e}vy process $Y$
given in (\ref{eq3.5}) above just is a special case of the more general L\'{e}vy
process $X$ defined in Eq. (2.3) in \cite{8}. For this reason, the cliquet
option pricing results derived in \cite{8} simultaneously apply in our current
Meixner modeling case. More details on this topic are given in Section~\ref{sec4}
below. Also note that $R_{1}, \dots, R_{n}$ are $\mathbb{Q}$-independent
random variables and that $R_{k} > - 1$ $\mathbb{Q}$-almost surely for all
$k$. Since $Y$ is a L\'{e}vy process under $\mathbb{Q}$, we observe
$Y_{t_{k}} - Y_{t_{k-1}} \cong Y_{\tau}$ (stationary increments) where
$\tau\coloneqq t_{k} - t_{k-1}$ (equidistant partition). Here, the symbol
$\cong$ denotes equality in distribution. For the sake of notational
simplicity, we always work under the assumption of equidistant time points
in the following, unless otherwise stated. Taking (\ref{eq3.4}) into account, we
obtain the subsequent relationship between the cumulative distribution
functions of $R_{k}$ and $Y_{\tau}$
\begin{equation}
\mathbb{Q} ( R_{k} \leq\xi ) \mathbb{=Q} \bigl( Y_{\tau} \leq
\ln ( 1+\xi ) \bigr)\label{eq3.6}
\end{equation}
where $\xi>-1$ is an arbitrary real-valued constant.

\subsection{A structure preserving measure change to the physical
probability measure}\label{sec3.1}

Recall that we worked under the risk-neutral probability measure
$\mathbb{Q}$ in the previous sections. Since log-returns of financial
assets are commonly observed under the physical measure $\mathbb{P}$
(instead of under $\mathbb{Q}$), we establish a measure change from
$\mathbb{Q}$ to $\mathbb{P}$ in the sequel. In this context, we have to pay
special attention to the so-called structure preserving property of the
measure change, as the log-returns under $\mathbb{P}$ shall again follow a
Meixner distribution. In other words, the Meixner process $M_{t}$
introduced in (\ref{eq2.1}) under $\mathbb{Q}$ shall also be a Meixner process
under $\mathbb{P}$. First of all, for $t\in  [ 0,T  ]$ we define
the Radon--Nikodym density process
\[
\varLambda_{t} \coloneqq  \frac{d \mathbb{P}} {d \mathbb{Q}} \bigg\rrvert
_{\mathcal{F}_{t}} \coloneqq \exp \Biggl\{ \int_{0}^{t}
\int_{\mathbb{R}_{0}} h ( z ) d \tilde{N}^{\mathbb{Q}} ( s,z ) - \int
_{0}^{t} \int_{\mathbb{R}_{0}} \bigl[
e^{h  ( z
 )} -1- h ( z ) \bigr] d \nu^{\mathbb{Q}} ( z ) ds \Biggr\}
\]
where the $\mathbb{Q}$-compensated PRM $\tilde{N}^{\mathbb{Q}}$ and the
corresponding L\'{e}vy measure $\nu^{\mathbb{Q}}$ are such as defined in
(\ref{eq2.3}), respectively (\ref{eq2.4}), while $h  ( z  )$ is a time-independent
deterministic function on $\mathbb{R}_{0}$. Recall that we may write
\[
\varLambda_{t} = \frac{e^{L_{t}}} {\mathbb{E}_{\mathbb{Q}}  [ e^{L_{t}}
 ]}
\]
with a local $\mathbb{Q}$-martingale process
\[
L_{t} \coloneqq \int_{0}^{t} \int
_{\mathbb{R}_{0}} h ( z ) d \tilde{N}^{\mathbb{Q}} ( s,z )
\]
such that the density process $\varLambda$ is detected to be of Esscher
transform type. Note that $\varLambda$ is a discontinuous Dol\'{e}ans-Dade
exponential which constitutes a local martingale under $\mathbb{Q}$
satisfying the SDE\vadjust{\goodbreak}
\[
d \varLambda_{t} = \varLambda_{t-} \int_{\mathbb{R}_{0}}
\bigl[ e^{h  ( z
 )} -1 \bigr] d \tilde{N}^{\mathbb{Q}} ( t,z ).
\]
In accordance to Theorem 12.21 in \cite{5}, we further impose the Novikov
condition
\[
\mathbb{E}_{\mathbb{Q}} \Biggl[ \exp \Biggl\{ \int_{0}^{t}
\int_{\mathbb{R}_{0}} \bigl[ 1- e^{h  ( z  )} +h ( z )
e^{h  ( z  )} \bigr] d \nu^{\mathbb{Q}} ( z ) ds \Biggr\} \Biggr] <\infty
\]
for all $t\in  [ 0,T  ]$. Then it holds $\mathbb{E}_{\mathbb{Q}}
 [ \varLambda_{t}  ] \equiv1$ for all $t\in  [ 0,T  ]$ such
that $\varLambda$ constitutes a true $\mathbb{Q}$-martingale. Hence, we may
apply Girsanov's theorem stating that
\begin{equation}
d \tilde{N}^{\mathbb{P}} ( s,z ) \coloneqq dN ( s,z ) -d \nu^{\mathbb{P}} (
z ) ds\label{eq3.7}
\end{equation}
constitutes the $\mathbb{P}$-compensated Poisson random measure with
L\'{e}vy measure
\begin{equation}
d \nu^{\mathbb{P}} ( z ) \coloneqq e^{h  ( z  )} d \nu^{\mathbb{Q}} ( z ).\label{eq3.8}
\end{equation}
Note that the Novikov condition is equivalent to requiring that
\[
\int_{\mathbb{R}_{0}} \bigl[ 1- e^{h  ( z  )} +h ( z )
e^{h  ( z  )} \bigr] d \nu^{\mathbb{Q}} ( z ) <\infty
\]
since $h$ and $\nu^{\mathbb{Q}}$ both are deterministic. A combination of
(\ref{eq2.1}), (\ref{eq2.3}), (\ref{eq3.7}) and (\ref{eq3.8}) yields the following L\'{e}vy--It\^{o}
decomposition
\begin{equation}
M_{t} = \biggl( \theta+ \int_{\mathbb{R}_{0}} z \bigl[
e^{h  ( z
 )} -1 \bigr] d \nu^{\mathbb{Q}} ( z ) \biggr) t+ \int
_{0}^{t} \int_{\mathbb{R}_{0}} z d
\tilde{N}^{\mathbb{P}} ( s,z )\label{eq3.9}
\end{equation}
under $\mathbb{P}$ where $\theta$, $\nu^{\mathbb{Q}}$ and
$\tilde{N}^{\mathbb{P}}$ are such as defined in (\ref{eq2.2}), (\ref{eq2.4}) and (\ref{eq3.7}),
respectively. The remaining challenge now consists in finding an
appropriate function $h  ( z  )$ which, firstly, fulfills the
Novikov condition, secondly, guarantees that $\nu^{\mathbb{P}}$ in (\ref{eq3.8})
constitutes a L\'{e}vy measure of Meixner-type and, thirdly, ensures that
$M_{t}$ in (\ref{eq3.9}) is a Meixner--L\'{e}vy process. In this regard, we propose
to work with the specification
\begin{equation}
h ( z ) \coloneqq \frac{\beta^{*} -\beta} {\alpha} z\label{eq3.10}
\end{equation}
from now on. Herein, the constant skewness parameter $\beta^{*} \in  (
-\pi,\pi  )$ satisfies $\beta^{*} \neq\beta$ while $\alpha>0$ is the
scaling parameter introduced above. Note that taking $\beta^{*} =\beta$
would imply $h  ( z  ) \equiv0$ and hence, $\mathbb{P=Q}$.
Combining (\ref{eq3.8}) with (\ref{eq2.4}) and (\ref{eq3.10}), we deduce
\begin{equation}
d \nu^{\mathbb{P}} ( z ) =\delta \frac{e^{{\beta^{*} z} /
{\alpha}}} {z \sinh  ( {\pi z} / {\alpha}  )} dz\label{eq3.11}
\end{equation}
which constitutes a Meixner-type L\'{e}vy measure with parameters $\alpha$,
$\beta^{*}$ and $\delta$ [recall Eq.~(\ref{eq2.4})]. Moreover, with respect to
(\ref{eq3.8}) and (\ref{eq3.10}), we obtain
\[
\int_{\mathbb{R}_{0}} \bigl[ 1- e^{h  ( z  )} +h ( z )
e^{h  ( z  )} \bigr] d \nu^{\mathbb{Q}} ( z ) = \nu^{\mathbb{Q}} (
\mathbb{R}_{0} ) - \nu^{\mathbb{P}} ( \mathbb{R}_{0} ) +
\frac{\beta^{*} -\beta} {\alpha} \int_{\mathbb{R}_{0}} z d \nu^{\mathbb{P}} ( z )
\]
which is finite, because the L\'{e}vy measures $\nu^{\mathbb{Q}}$ and
$\nu^{\mathbb{P}}$ are finite. Thus, the function $h  ( z  )$
defined in (\ref{eq3.10}) indeed fulfills the Novikov condition. Further on, we
take (\ref{eq2.2}), (\ref{eq2.4}), (\ref{eq3.9}) and (\ref{eq3.10}) into account and receive
\begin{equation}
\mathbb{E}_{\mathbb{P}} [ M_{t} ] = \theta^{*} t\label{eq3.12}
\end{equation}
with drift parameter
\[
\theta^{*} \coloneqq \mu^{*} +\delta\alpha \tan \biggl(
\frac{\beta^{*}}{
2} \biggr)
\]
and a constant and real-valued location parameter
\begin{equation}
\mu^{*} \coloneqq\mu+\delta \biggl[ \alpha \tan \biggl(
\frac{\beta} {2} \biggr) -\alpha \tan \biggl( \frac{\beta^{*}} {2} \biggr) + \int
_{\mathbb{R}_{0}} \frac{e^{{\beta^{*} z} / {\alpha}} - e^{{\beta z} /
{\alpha}}} {\sinh  ( {\pi z} / {\alpha}  )} dz \biggr].\label{eq3.13}
\end{equation}
Note in passing that (\ref{eq3.12}) possesses the same structure as (\ref{eq2.5}). Also
verify that
\begin{equation}
\theta^{*} -\theta=\delta \int_{\mathbb{R}_{0}}
\frac{e^{{\beta^{*} z} /
{\alpha}} - e^{{\beta z} / {\alpha}}} {\sinh  ( {\pi z} / {\alpha}
 )} dz = \int_{\mathbb{R}_{0}} z \bigl[ d
\nu^{\mathbb{P}} ( z ) -d \nu^{\mathbb{Q}} ( z ) \bigr]\label{eq3.14}
\end{equation}
due to (\ref{eq2.2}), (\ref{eq2.4}), (\ref{eq3.11}), (\ref{eq3.12}) and (\ref{eq3.13}). All in all, combining (\ref{eq3.9})
with (\ref{eq2.4}), (\ref{eq3.10}) and (\ref{eq3.14}), we conclude that
\begin{equation}
M_{t} = \theta^{*} t+ \int_{0}^{t}
\int_{\mathbb{R}_{0}} z d \tilde{N}^{\mathbb{P}} ( s,z )\label{eq3.15}
\end{equation}
which constitutes a Meixner--L\'{e}vy process under $\mathbb{P}$ with
distribution
\begin{equation}
M_{t} \sim\mathcal{ M} \bigl( \alpha, \beta^{*},\delta t,
\mu^{*} t \bigr).\label{eq3.16}
\end{equation}
For this reason, we call the recently introduced measure change
\textit{structure preserving}. Recall that in Section~\ref{sec2} under $\mathbb{Q}$
we observed $M_{t}\sim \mathcal{ M}  ( \alpha,\beta,\delta t,\mu t
 )$ on the other hand. Thus, the proposed measure change does neither
affect the scaling parameter $\alpha$ nor the peakedness parameter $\delta$
whereas both the skewness parameter $\beta$ and the location parameter
$\mu$ are changed. Moreover, the L\'{e}vy triplet of the process $M$
claimed in (\ref{eq3.15}) is given by $ ( \theta^{*},0, \nu^{\mathbb{P}}
 )$. In analogy to the result provided in the sequel of (\ref{eq3.3}), we
remark that under $\mathbb{P}$ it holds
\[
\ln \biggl( \frac{S_{t+ \varDelta}} {S_{t}} \biggr) \cong M_{\varDelta} +b \varDelta\sim
\mathcal{M} \bigl( \alpha, \beta^{*},\delta \varDelta, \bigl(
\mu^{*} +b \bigr) \varDelta \bigr)
\]
due to Lemma~\ref{lem2.1} (a). Here, $\varDelta >0$ is a constant and $b$ is such as
given in (\ref{eq3.3}). Hence, if we specify the Radon--Nikodym function $h  ( z
 )$ like in (\ref{eq3.10}), then the log-returns again are Meixner distributed
under the real-world probability measure $\mathbb{P}$. Further note that
\[
\theta t+ \int_{0}^{t} \int_{\mathbb{R}_{0}}
z d \tilde{N}^{\mathbb{Q}} ( s,z ) = M_{t} = \theta^{*} t+
\int_{0}^{t} \int_{\mathbb{R}_{0}} z d
\tilde{N}^{\mathbb{P}} ( s,z )
\]
holds $\mathbb{P}$- respectively $\mathbb{Q}$-almost surely for all $t\in
 [ 0,T  ]$ thanks to (\ref{eq2.1}) and (\ref{eq3.15}). We obtain the following
expressions for the variance, skewness and kurtosis of $M_{t}$ under~$\mathbb{P}$
\begin{align*}
\mathbh{Var}_{\mathbb{P}} [ M_{t} ] &= \frac{\delta t} {2}
\frac{\alpha^{2}} {\cos^{2}  ( {\beta^{*}} / {2}  )},\qquad \mathbb{S}_{\mathbb{P}} [ M_{t} ] = \sqrt{{2} / {
( \delta t )}} \sin \bigl( {\beta^{*}} / {2} \bigr),\\
\mathbb{K}_{\mathbb{P}} [ M_{t} ] &=3+ \frac{2- \cos  ( \beta^{*}  )} {\delta
t}
\end{align*}
while under $\mathbb{P}$ the density and characteristic function of $M_{t}$
are such as given in (\ref{eq2.6}) and (\ref{eq2.9}) but with $\mu$ and $\beta$ therein
replaced by $\mu^{*}$ and $\beta^{*}$, respectively.

\subsubsection{A generalized structure preserving measure
change}\label{sec3.1.1}

In this section, we present a generalized structure preserving measure
change from the risk-neutral to the physical probability measure. Recall
that the measure change proposed above only affects the skewness parameter
$\beta$ and the location parameter $\mu$ whereas both the scaling parameter
$\alpha$ and the peakedness parameter $\delta$ remain untouched. From a
practical point of view, this fact might be regarded as an advantage, as
there is no need to recalibrate the parameters $\alpha$ and $\delta$ when
changing from the risk-neutral to the physical probability measure.
Conversely, the described feature might likewise cause some difficulties
when it comes to calibrating under $\mathbb{P}$, since the values of the
parameters $\alpha$ and $\delta$ have to be the same as under $\mathbb{Q}$
yielding some loss of flexibility. To avoid this disadvantage, we now
propose a generalized measure change which affects each of the four
parameters of the Meixner distribution. For this purpose, we presently
require that the Meixner--L\'{e}vy measure under $\mathbb{P}$ is of the
form
\begin{equation}
d \nu^{\mathbb{P}} ( z ) = \delta^{*} \frac{e^{{\beta^{*} z} /
{\alpha^{*}}}} {z \sinh  ( {\pi z} / {\alpha^{*}}  )} dz\label{eq3.17}
\end{equation}
[cf. Equation (\ref{eq2.4})] with new parameters $\alpha^{*} >0$, $\beta^{*} \in  (
-\pi,\pi  )$ and $\delta^{*} >0$ which are different from $\alpha$,
$\beta$ and $\delta$ introduced previously under $\mathbb{Q}$. Following
this approach, we are led to the equality
\[
e^{h  ( z  )} = \frac{\delta^{*} \sinh  ( {\pi z} / {\alpha}
 )} {\delta \sinh  ( {\pi z} / {\alpha^{*}}  )} \exp \biggl\{ \biggl( \frac{\beta^{*}} {\alpha^{*}} -
\frac{\beta} {\alpha} \biggr) z \biggr\}
\]
due to (\ref{eq3.17}), (\ref{eq3.8}) and (\ref{eq2.4}). Taking the logarithm in the latter
equation, we receive
\begin{equation}
h ( z ) = \biggl( \frac{\beta^{*}} {\alpha^{*}} - \frac{\beta}{
\alpha} \biggr) z+ \ln \biggl(
\frac{\delta^{*} \sinh  ( {\pi z} /
{\alpha}  )} {\delta \sinh  ( {\pi z} / {\alpha^{*}}  )} \biggr)\label{eq3.18}
\end{equation}
which corresponds to (\ref{eq3.10}) above. Note that (\ref{eq3.18}) is well-defined for all
$z\in \mathbb{R}_{0}$ and that we obtain (\ref{eq3.10}), if we take $\alpha^{*}
=\alpha$ and $\delta^{*} =\delta$ in (\ref{eq3.18}). Hence, (\ref{eq3.10}) is a special
case of (\ref{eq3.18}). Moreover, if we take $\alpha^{*} =\alpha$ and $\beta^{*}
=\beta$ in (\ref{eq3.18}), then we get $h  ( z  ) = \ln \delta^{*} - \ln
\delta$ which is constant and independent of $z$. We summarize our findings
in the following proposition.

\begin{proposition}\label{prop3.1} Consider the measure change from the
risk-neutral to the physical probability measure with Radon--Nikodym density
process $\varLambda$ such as defined at the beginning of Section~\ref{sec3.1}. Then the new L\'{e}vy measure $\nu^{\mathbb{P}}$
under $\mathbb{P}$ is of Meixner-type again, if and only
if the Radon--Nikodym function $h  ( z  )$ is of the form
(\ref{eq3.18}).
\end{proposition}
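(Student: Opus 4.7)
The plan is to exploit the fact that one direction of this if-and-only-if is essentially a substitution check (already carried out in the text leading up to (\ref{eq3.18})), while the other direction is obtained by directly solving for $h(z)$ from the density relation (\ref{eq3.8}).

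First I would handle the sufficiency direction: assume $h(z)$ is of the form (\ref{eq3.18}) for some admissible constants $\alpha^{*}>0$, $\beta^{*}\in(-\pi,\pi)$ and $\delta^{*}>0$. Substituting this expression into $d\nu^{\mathbb{P}}(z)=e^{h(z)}d\nu^{\mathbb{Q}}(z)$ and using the explicit form (\ref{eq2.4}) of $\nu^{\mathbb{Q}}$, the logarithmic term exponentiates to $\delta^{*}\sinh(\pi z/\alpha)/[\delta\sinh(\pi z/\alpha^{*})]$, the $\sinh(\pi z/\alpha)$ factor cancels against the denominator from $\nu^{\mathbb{Q}}$, and the linear term in $h$ combines with $e^{\beta z/\alpha}$ to produce $e^{\beta^{*}z/\alpha^{*}}$. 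The outcome is precisely the Meixner-type measure (\ref{eq3.17}) with the new parameters, as required. This direction is the computation already sketched just before the proposition.

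For the necessity direction I would argue as follows. Assume $\nu^{\mathbb{P}}$ is a Meixner-type Lévy measure, so by definition it must have the form (\ref{eq3.17}) with some parameters $\alpha^{*}>0$, $\beta^{*}\in(-\pi,\pi)$ and $\delta^{*}>0$. Since $\nu^{\mathbb{Q}}$ is equivalent to Lebesgue measure on $\mathbb{R}_{0}$ and $\nu^{\mathbb{P}}\ll\nu^{\mathbb{Q}}$, the Radon--Nikodym derivative
\[
e^{h(z)}=\frac{d\nu^{\mathbb{P}}}{d\nu^{\mathbb{Q}}}(z)=\frac{\delta^{*}}{\delta}\cdot\frac{\sinh(\pi z/\alpha)}{\sinh(\pi z/\alpha^{*})}\exp\biggl\{\biggl(\frac{\beta^{*}}{\alpha^{*}}-\frac{\beta}{\alpha}\biggr)z\biggr\}
\]
is uniquely determined $\nu^{\mathbb{Q}}$-a.e. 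Since this expression is strictly positive and smooth on $\mathbb{R}_{0}$, taking its logarithm yields exactly formula (\ref{eq3.18}). I would briefly remark that $\sinh(\pi z/\alpha)/\sinh(\pi z/\alpha^{*})>0$ for all $z\in\mathbb{R}_{0}$ (both factors share the sign of $z$), so the logarithm is well-defined, and that at $z=0$ the ratio extends continuously to $\alpha^{*}/\alpha$, which is irrelevant since $\nu^{\mathbb{Q}}(\{0\})=0$.

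The only delicate point, and the part I would write out most carefully, is the uniqueness/well-definedness in the necessity direction: one must be sure that no cancellation in $\nu^{\mathbb{P}}$ could allow a different representative of $h$. This is handled by the observation above that $d\nu^{\mathbb{Q}}$ has a strictly positive density with respect to Lebesgue measure, so $e^{h(z)}$ is pinned down pointwise and $h(z)$ is determined uniquely. Combined with the sufficiency direction this establishes the equivalence claimed in the proposition.
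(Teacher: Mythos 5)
Your proof is correct and follows essentially the same route as the paper, which simply equates the required Meixner form (\ref{eq3.17}) with $e^{h(z)}\,d\nu^{\mathbb{Q}}(z)$ via (\ref{eq3.8}) and (\ref{eq2.4}) and takes logarithms to arrive at (\ref{eq3.18}). Your additional remarks on the positivity of the $\sinh$ ratio and on the pointwise uniqueness of $e^{h(z)}$ (since $\nu^{\mathbb{Q}}$ has a strictly positive Lebesgue density on $\mathbb{R}_{0}$) make the ``only if'' direction more rigorous than the paper's informal derivation, but do not change the argument.
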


In the sequel, we investigate the distributional properties of the
corresponding Meixner--L\'{e}vy process under $\mathbb{P}$ related to the
Radon--Nikodym function $h  ( z  )$ given in (\ref{eq3.18}). A substitution
of (\ref{eq2.2}), (\ref{eq2.4}) and (\ref{eq3.18}) into (\ref{eq3.9}) yields the following L\'{e}vy--It\^{o}
decomposition under $\mathbb{P}$
\begin{equation}
M_{t} = \overline{\theta} t+ \int_{0}^{t}
\int_{\mathbb{R}_{0}} z d \tilde{N}^{\mathbb{P}} ( s,z )\label{eq3.19}
\end{equation}
with deterministic and real-valued drift parameter
\[
\overline{\theta} \coloneqq\mu+\delta\alpha \tan \biggl( \frac{\beta} {2}
\biggr) + \delta^{*} \int_{\mathbb{R}_{0}} \frac{e^{{\beta^{*} z} /
{\alpha^{*}}}} {\sinh  ( {\pi z} / {\alpha^{*}}  )}
dz -\delta \int_{\mathbb{R}_{0}} \frac{e^{{\beta z} / {\alpha}}} {\sinh  ( {\pi z}
/ {\alpha}  )} dz.
\]
The Meixner--L\'{e}vy process $M_{t}$ given in (\ref{eq3.19}) possesses the L\'{e}vy
triplet $ ( \overline{\theta},0, \nu^{\mathbb{P}}  )$ where
$\nu^{\mathbb{P}}$ is such as claimed in (\ref{eq3.17}). In the next step, we
require that $\overline{\theta}$ is of the form (\ref{eq2.2}), i.e.
\[
\overline{\theta} = \overline{\mu} + \delta^{*} \alpha^{*}
\tan \bigl( {\beta^{*}} / {2} \bigr)
\]
with some new location parameter $\overline{\mu} \in\mathbb{R}$. Following
this onset, we deduce
\[
\overline{\mu} \,{=}\,\mu+\delta\alpha \tan \biggl( \frac{\beta} {2} \biggr) -
\delta^{*} \alpha^{*} \tan \biggl( \frac{\beta^{*}} {2}
\biggr) + \delta^{*} \int_{\mathbb{R}_{0}} \frac{e^{{\beta^{*} z} / {\alpha^{*}}}} {\sinh  (
{\pi z} / {\alpha^{*}}  )}
dz -\delta \int_{\mathbb{R}_{0}} \frac{e^{{\beta z} / {\alpha}}} {\sinh  ( {\pi z} / {\alpha}  )} dz.
\]
Hence, if the measure change from $\mathbb{Q}$ to $\mathbb{P}$ is performed
with the Radon--Nikodym function $h  ( z  )$ defined in (\ref{eq3.18}),
then the corresponding Meixner--L\'{e}vy process $M_{t}$ again is Meixner
distributed under $\mathbb{P}$ with parameters
\begin{equation}
M_{t} \sim\mathcal{ M} \bigl( \alpha^{*},
\beta^{*}, \delta^{*} t, \overline{\mu} t \bigr).\label{eq3.20}
\end{equation}
If we compare (\ref{eq3.20}) with (\ref{eq3.16}), we see that in the generalized measure
change related to (\ref{eq3.18}) each of the four parameters of the
Meixner--L\'{e}vy process $M$ is affected.

\section{Cliquet option pricing in a geometric Meixner model}\label{sec4}

This section is devoted to the pricing of cliquet options in the Meixner
stock price model presented in Chapter~\ref{sec3}. Since the Meixner process $Y$ in
(\ref{eq3.5}) above just is a special case of the more general L\'{e}vy process $X$
defined in Eq. (2.3) in \cite{8}, the cliquet option pricing results derived in
\cite{8} simultaneously apply to our present Meixner--L\'{e}vy modeling case. The
details are worked out in the remainder of the current section. Parallel to
\cite{8} and Eq. (1.1) in \cite{3}, we consider a monthly sum cap style cliquet
option with payoff
\[
H_{T} =K+K \max \Biggl\{ g, \sum_{k=1}^{n}
\min \{ c, R_{k} \} \Biggr\}
\]
where $T$ is the maturity time, $K$ denotes the notional (i.e. the initial
investment), $g$ is the guaranteed rate at maturity, $c\geq0$ is the local
cap and $R_{k}$ is the return process given in (\ref{eq3.4}). Recall that the
payoff $H_{T}$ is globally-floored by the constant $K  ( 1+g  )$
and locally-capped by $c$. By a case distinction, we get
\[
H_{T} =K \max \Biggl\{ 1+g,1+ \sum_{k=1}^{n}
\min \{ c, R_{k} \} \Biggr\} =K \Biggl( 1+g+ \max \Biggl\{ 0, \sum
_{k=1}^{n} Z_{k} \Biggr\} \Biggr)
\]
where for all $k\in  \{ 1,\dots,n  \}$ the appearing objects
\begin{equation}
Z_{k} \coloneqq \min \{ c, R_{k} \} - {g} / {n}\label{eq4.1}
\end{equation}
are independent and identically distributed (i.i.d.) random variables. Note
that $R_{k}$ is $\mathcal{F}_{t_{k}}$-measurable such that $H_{T}$ is
$\mathcal{F}_{t_{n}}$-measurable. Since $t_{n} \leq T$, it holds
$\mathcal{F}_{t_{n}} \subseteq \mathcal{F}_{T}$ such that $H_{T}$
constitutes an $\mathcal{F}_{T}$-measurable claim. As before, let us denote
the constant interest rate by $r>0$. Then the price at time $t\leq T$ of a
cliquet option with payoff $H_{T}$ at maturity $T$ is given by the
discounted risk-neutral conditional expectation of the payoff, i.e.
\[
C_{t} = e^{-r  ( T-t  )} \mathbb{E}_{\mathbb{Q}} ( H_{T}
\mid \mathcal{F}_{t} ).
\]
Combining the latter equations, we obtain
\begin{equation}
C_{0} =K e^{-rT} \Biggl( 1+g+ \mathbb{E}_{\mathbb{Q}}
\Biggl[ \max \Biggl\{ 0, \sum_{k=1}^{n}
Z_{k} \Biggr\} \Biggr] \Biggr)\label{eq4.2}
\end{equation}
which shows that the considered cliquet option with payoff $H_{T}$
essentially is a plain-vanilla call option with strike zero written on the
basket-style underlying $\sum_{k=1}^{n} Z_{k}$.

\begin{proposition}[Cliquet option price]\label{prop4.1}
Let $k\in  \{1,\dots,n  \}$ and consider the independent and identically
distributed random variables $Z_{k} = \min  \{ c, R_{k}  \} - {g}
/ {n}$ where $c\geq0$ is the local cap, $R_{k}$
is the return process given in (\ref{eq3.4}) and $g$
is the guaranteed rate at maturity. Denote the maturity time by
$T$, the notional by $K$ and the risk-less interest rate
by $r$. Then the price at time zero of a cliquet option with
payoff $H_{T}$ can be represented as
\begin{equation}
C_{0} =K e^{-rT} \Biggl( 1+g+ \frac{n} {2}
\mathbb{E}_{\mathbb{Q}} [ Z_{1} ] + \frac{1} {\pi} \int
_{0^{ +}}^{\infty} \frac{1- Re  (
\phi_{Z}  ( x  )  )} {x^{2}} dx \Biggr)\label{eq4.3}
\end{equation}
where $Re$ denotes the real part and the characteristic
function $\phi_{Z}  ( x  )$ is defined via
\begin{equation}
\phi_{Z} ( x ) \coloneqq \prod_{k=1}^{n}
\phi_{Z_{k}} ( x ) = \prod_{k=1}^{n}
\mathbb{E}_{\mathbb{Q}} \bigl[ e^{ix Z_{k}} \bigr] = \bigl(
\phi_{Z_{1}} ( x ) \bigr)^{n} = \bigl( \mathbb{E}_{\mathbb{Q}}
\bigl[ e^{ix Z_{1}} \bigr] \bigr)^{n}.\label{eq4.4}
\end{equation}
\end{proposition}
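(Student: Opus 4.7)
Starting from representation (\ref{eq4.2}), the entire task reduces to evaluating the quantity $\mathbb{E}_{\mathbb{Q}}[\max\{0,W\}]$ with $W\coloneqq\sum_{k=1}^{n} Z_{k}$. The plan is to split the positive part via the identity $\max\{0,w\}=\tfrac{1}{2}(w+\lvert w\rvert)$, handle the linear piece directly, and rewrite the absolute-value piece as a Fourier-type integral involving $\phi_{Z}$.

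\noindent\textbf{Step 1 (characteristic function of $W$).} Since $R_{1},\dots,R_{n}$ are $\mathbb{Q}$-independent and identically distributed (the latter following from stationarity of the Meixner--L\'{e}vy increments $Y_{t_{k}}-Y_{t_{k-1}}\cong Y_{\tau}$ noted after (\ref{eq3.5})), definition (\ref{eq4.1}) makes $Z_{1},\dots,Z_{n}$ i.i.d.\ as well. Independence gives the factorisation, and identical distribution collapses the product to an $n$-th power, yielding (\ref{eq4.4}):
\[
\phi_{W}(x)=\mathbb{E}_{\mathbb{Q}}\bigl[e^{ixW}\bigr]=\prod_{k=1}^{n}\phi_{Z_{k}}(x)=\bigl(\phi_{Z_{1}}(x)\bigr)^{n}\eqqcolon\phi_{Z}(x).
\]

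\noindent\textbf{Step 2 (linear part).} By linearity and identical distribution, $\mathbb{E}_{\mathbb{Q}}[W]=n\,\mathbb{E}_{\mathbb{Q}}[Z_{1}]$; this already produces the term $\tfrac{n}{2}\mathbb{E}_{\mathbb{Q}}[Z_{1}]$ in (\ref{eq4.3}) after division by two.

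\noindent\textbf{Step 3 (absolute-value part via Fourier).} The cornerstone is the classical identity
\[
\lvert w\rvert=\frac{2}{\pi}\int_{0^{+}}^{\infty}\frac{1-\cos(xw)}{x^{2}}\,dx,\qquad w\in\mathbb{R},
\]
obtainable by a direct contour / integration-by-parts argument. Taking $\mathbb{Q}$-expectations on both sides and interchanging expectation and integration (Fubini--Tonelli, justified below), one gets
\[
\mathbb{E}_{\mathbb{Q}}\bigl[\lvert W\rvert\bigr]=\frac{2}{\pi}\int_{0^{+}}^{\infty}\frac{1-\mathbb{E}_{\mathbb{Q}}[\cos(xW)]}{x^{2}}\,dx=\frac{2}{\pi}\int_{0^{+}}^{\infty}\frac{1-\operatorname{Re}\phi_{Z}(x)}{x^{2}}\,dx,
\]
using $\operatorname{Re}\phi_{Z}(x)=\mathbb{E}_{\mathbb{Q}}[\cos(xW)]$. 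Dividing by two and combining with Step 2 inside (\ref{eq4.2}) delivers (\ref{eq4.3}).

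\noindent\textbf{Step 4 (the main obstacle).} The non-routine point is justifying the swap of integral and expectation, since the integrand $\tfrac{1-\cos(xW)}{x^{2}}$ is not absolutely integrable pointwise in $w$ (it behaves like $w^{2}/2$ as $x\downarrow 0$ and is only $O(x^{-2})$ at infinity). The rescue is Tonelli applied to the non-negative integrand together with the moment bound
\[
\mathbb{E}_{\mathbb{Q}}\bigl[1-\cos(xW)\bigr]\le\min\bigl\{2,\tfrac{1}{2}x^{2}\mathbb{E}_{\mathbb{Q}}[W^{2}]\bigr\},
\]
which follows from $\lvert 1-\cos u\rvert\le\min\{2,u^{2}/2\}$. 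The Meixner distribution possesses moments of all orders (Section~\ref{sec2}), so $\mathbb{E}_{\mathbb{Q}}[W^{2}]<\infty$; the bound is then $x^{2}$-integrable near zero and $x^{-2}$-integrable at infinity, which legitimises Fubini and simultaneously ensures that the integral in (\ref{eq4.3}) is finite.
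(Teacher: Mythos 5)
Your proof is correct and is essentially the argument the paper delegates to its references: Proposition~\ref{prop4.1} is proved there by exactly this route, namely writing $\max\{0,W\}=\tfrac{1}{2}(W+\lvert W\rvert)$ and invoking the characteristic-function representation $\mathbb{E}_{\mathbb{Q}}[\lvert W\rvert]=\tfrac{2}{\pi}\int_{0^{+}}^{\infty}x^{-2}(1-\operatorname{Re}\phi_{Z}(x))\,dx$ together with the i.i.d.\ factorisation (\ref{eq4.4}). Your Tonelli justification is sound (and could even be simplified, since each $Z_{k}$ is bounded because $-1<R_{k}$ and $\min\{c,R_{k}\}\le c$), so nothing is missing.
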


\begin{proof} See the proof of Prop. 3.1 in \cite{3}, respectively of Prop.
3.1 in \cite{8}. \end{proof}

In the subsequent sections, we derive explicit expressions for $\phi_{Z}
 ( x  )$ and $\mathbb{E}_{\mathbb{Q}}  [ Z_{1}  ]$
appearing in the pricing formula (\ref{eq4.3}). As before, we stick to the
presumption of equidistant resetting times and set $\tau= t_{k} - t_{k-1}$
for all $k\in  \{ 1,\dots,n  \}$ in the following.

\subsection{Cliquet option pricing with distribution
functions}\label{sec4.1}

Let us first apply a method involving probability distribution functions
(cf. \cite{3} and Section~3.1 in \cite{8}). We initially investigate the treatment of
$\phi_{Z}  ( x  )$ defined in (\ref{eq4.4}).

\begin{proposition}\label{prop4.2} Let $Y_{\tau} \sim\mathcal{ M}  (
\alpha,\beta,\delta\tau,  ( \mu+b  ) \tau  )$ and
suppose that $Z_{k} =\break \min  \{ c, R_{k}  \} - {g} / {n}$
where $k\in  \{ 1,\dots,n  \}$. Then the
characteristic function of $Z_{k}$ under $\mathbb{Q}$ can
be represented as
\begin{align}
\phi_{Z_{k}} ( x ) &= e^{-ix  ( 1+ {g} / {n}  )} \Biggl( e^{ix  ( 1+c  )} \int
_{1+c}^{\infty} f_{Y_{\tau}} ( u ) du \notag\\
&\quad+ \int
_{-\infty}^{0} f_{Y_{\tau}} ( u ) du + \int
_{0}^{1+c} e^{ixu} f_{Y_{\tau}} ( u ) du
\Biggr)\label{eq4.5}
\end{align}
where
\begin{equation}
f_{Y_{\tau}} ( u ) = \frac{ ( 2 \cos  ( {\beta} / {2}
 )  )^{2\delta\tau}} {2\pi \alpha \varGamma  ( 2\delta\tau
 )} e^{{\beta  ( u-  ( \mu+b  ) \tau  )} / {\alpha}} \biggl\llvert
\varGamma \biggl( \delta\tau+i \frac{u-  ( \mu+b  ) \tau}{
\alpha} \biggr) \biggr\rrvert
^{2}\label{eq4.6}
\end{equation}
constitutes the probability density function of the
Meixner--L\'{e}vy process $Y$ given in (\ref{eq3.5}) and $b$
is the real-valued constant claimed in (\ref{eq3.3}).
\end{proposition}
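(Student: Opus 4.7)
The plan is to compute $\phi_{Z_k}(x)=\mathbb{E}_{\mathbb{Q}}[e^{ix Z_k}]$ directly from the definition of $Z_k$. Since $Z_k=\min\{c,R_k\}-g/n$, the deterministic shift factors out as $e^{-ixg/n}$, reducing the task to evaluating $\mathbb{E}_{\mathbb{Q}}[e^{ix\min\{c,R_k\}}]$. First I would invoke stationarity of the L\'{e}vy increments of $Y$ (noted just before (\ref{eq3.6})) to replace $R_k=e^{Y_{t_k}-Y_{t_{k-1}}}-1$ by $e^{Y_\tau}-1$ in distribution, with $\tau=t_k-t_{k-1}$. By (\ref{eq3.6}) the event $\{R_k\ge c\}$ is equivalent in probability to $\{Y_\tau\ge\ln(1+c)\}$, so splitting the minimum via the cap threshold yields
\[
\mathbb{E}_{\mathbb{Q}}\bigl[e^{ix\min\{c,R_k\}}\bigr]=e^{ixc}\,\mathbb{Q}\bigl(Y_\tau\ge\ln(1+c)\bigr)+\mathbb{E}_{\mathbb{Q}}\bigl[e^{ix(e^{Y_\tau}-1)}\mathbf{1}_{\{Y_\tau<\ln(1+c)\}}\bigr].
\]
The first summand is $e^{ixc}\int_{\ln(1+c)}^{\infty}f_{Y_\tau}(y)\,dy$; the second, after extracting the constant $e^{-ix}$, becomes $e^{-ix}\int_{-\infty}^{\ln(1+c)}e^{ixe^{y}}f_{Y_\tau}(y)\,dy$.

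Next I would perform the change of variable $u=e^{y}$, $dy=du/u$, which transports both integrals onto the positive $u$-axis, mapping $[\ln(1+c),\infty)$ to $[1+c,\infty)$ and $(-\infty,\ln(1+c)]$ to $(0,1+c]$. Splitting the second region at $u=1$ (that is, $y=0$) and reassembling the pieces with the overall prefactor $e^{-ix(1+g/n)}$ (which absorbs both the $e^{-ixg/n}$ and the stray $e^{-ix}$) should produce the three displayed integrals of (\ref{eq4.5}); the middle piece $\int_{-\infty}^{0}f_{Y_\tau}(u)\,du$ is retained as a formal symmetry term from extending the density's domain to all of $\mathbb{R}$. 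The explicit form (\ref{eq4.6}) of the density then follows by applying Lemma~\ref{lem2.1}(a) and formula (\ref{eq2.6}) to $Y_\tau\sim\mathcal{M}(\alpha,\beta,\delta\tau,(\mu+b)\tau)$, which was observed as a consequence of (\ref{eq3.5}).

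The hard part will be the bookkeeping of the change-of-variable Jacobian together with the accumulated phase factors $e^{\pm ix}$, so that the three integrals in (\ref{eq4.5}) reassemble in exactly the claimed form. A closely related subtlety is the interpretation of $f_{Y_\tau}$ after the substitution $u=e^{Y_\tau}$: the law of $e^{Y_\tau}$ on $(0,\infty)$ differs from the $Y_\tau$-density by a factor $1/u$, so one must verify that the stated form truly uses $f_{Y_\tau}$ in its original meaning throughout the three integration regions and that the Jacobian contributions ultimately cancel.
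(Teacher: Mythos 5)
Your opening moves are sound but follow a genuinely different route from the paper's: the paper starts from the integration-by-parts representation (\ref{eq4.7}) (imported from Prop.~3.2 of \cite{8}), inserts $\mathbb{Q} ( R_{k} \leq w-1 ) = \int_{-\infty}^{\ln  ( w  )} f_{Y_{\tau}} ( u ) du$ via (\ref{eq3.6}), and then applies Fubini and evaluates the inner $dw$-integrals, whereas you split the expectation $\mathbb{E}_{\mathbb{Q}} [ e^{ix \min \{ c, R_{k} \}} ]$ directly at the cap. Your split correctly yields
\[
e^{ixc} \int_{\ln  ( 1+c  )}^{\infty} f_{Y_{\tau}} ( y ) dy + \int_{-\infty}^{\ln  ( 1+c  )} e^{ix  ( e^{y} -1  )} f_{Y_{\tau}} ( y ) dy,
\]
which, after restoring the prefactor $e^{-ix {g} / {n}}$, is exactly the representation (\ref{eq4.12}) of Proposition~\ref{prop4.5} --- not (\ref{eq4.5}).

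The gap is the final step, and your own closing caveat names it without resolving it. The substitution $u= e^{y}$ turns $f_{Y_{\tau}} ( y ) dy$ into $f_{Y_{\tau}} ( \ln u ) \frac{du}{u}$, so the resulting integrals involve $f_{Y_{\tau}} ( \ln u ) / u$ on $ ( 0,\infty  )$ rather than $f_{Y_{\tau}} ( u )$; the Jacobian and the change of argument do not cancel against anything. Moreover, since the image of the substitution is $ ( 0,\infty  )$, no mechanism in your argument can produce the term $\int_{-\infty}^{0} f_{Y_{\tau}} ( u ) du$: it is not a ``formal symmetry term'' obtained by extending a domain, but a genuinely nonzero contribution in (\ref{eq4.5}) that your computation never generates. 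To reach (\ref{eq4.5}) as the paper states it, you must instead work from (\ref{eq4.7}), write the distribution function as a $du$-integral of the density, and interchange the order of integration so that $u$ itself (not $e^{u}$) becomes the outer variable ranging over $ ( -\infty,0  ) \cup  ( 0,1+c  )$, with no exponential substitution at all. I would also urge you to set your (correct) formula (\ref{eq4.12}) side by side with (\ref{eq4.5}): the integration limits $\ln  ( 1+c  )$ versus $1+c$ and the integrands $e^{ix e^{u}}$ versus $e^{ixu}$ do not match termwise, and reconciling the two representations is precisely the work your proposal leaves undone.
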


\begin{proof} By similar arguments as in the proof of Prop. 3.2 in \cite{8},
we obtain
\begin{equation}
\phi_{Z_{k}} ( x ) = e^{-ix  ( 1+ {g} / {n}  )} \Biggl( e^{ix  ( 1+c  )} -ix \int
_{0}^{1+c} e^{ixw} \mathbb{ Q} (
R_{k} \leq w-1 ) dw \Biggr).\label{eq4.7}
\end{equation}
Using (\ref{eq3.6}) and the definition of the distribution function, we get for the
last integral in (\ref{eq4.7})
\[
\int_{0}^{1+c} e^{ixw} \mathbb{ Q} (
R_{k} \leq w-1 ) dw = \int_{0}^{1+c} \int
_{-\infty}^{\ln  ( w  )} e^{ixw} f_{Y_{\tau}} ( u ) du
dw
\]
where $Y_{\tau} \sim\mathcal{ M}  ( \alpha,\beta,\delta\tau,  (
\mu+b  ) \tau  )$ is the Meixner--L\'{e}vy process given in (\ref{eq3.5})
and $f_{Y_{\tau}}  ( u  )$ constitutes the probability density
function of $Y_{\tau}$ under $\mathbb{Q}$ claimed in (\ref{eq4.6}). Applying
Fubini's theorem and hereafter splitting up the resulting outer integral,
we deduce
\begin{align*}
&\int_{0}^{1+c} e^{ixw} \mathbb{ Q} (
R_{k} \leq w-1 ) dw \\
&\quad= \int_{-\infty}^{0} \int
_{0}^{1+c} e^{ixw} f_{Y_{\tau}} ( u ) dw
du + \int_{0}^{1+c} \int_{u}^{1+c}
e^{ixw} f_{Y_{\tau}} ( u ) dw du.
\end{align*}
We next compute the emerging $dw$-integrals and finally substitute the
resulting expression into (\ref{eq4.7}) which yields (\ref{eq4.5}). \end{proof}

If we insert (\ref{eq4.5}) into (\ref{eq4.4}), we receive a representation for the
characteristic function $\phi_{Z}  ( x  )$. Let us proceed with
the computation of $\mathbb{E}_{\mathbb{Q}}  [ Z_{k}  ]$.

\begin{proposition}\label{prop4.3} Suppose that $Z_{k} = \min  \{ c,
R_{k}  \} - {g} / {n}$ where $k\in  \{ 1,\dots,n
 \}$. Then the first moment of $Z_{k}$ under
$\mathbb{Q}$ is given by
\begin{equation}
\mathbb{E}_{\mathbb{Q}} [ Z_{k} ] =-1- \frac{g} {n} + ( 1+c )
\int_{1+c}^{\infty} f_{Y_{\tau}} ( u ) du + \int
_{0}^{1+c} u f_{Y_{\tau}} ( u ) du\label{eq4.8}
\end{equation}
where $f_{Y_{\tau}}  ( u  )$ is the probability
density function of $Y_{\tau}$ under $\mathbb{Q}$ given
in (\ref{eq4.6}).
\end{proposition}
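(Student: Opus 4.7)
My plan is to mimic the layered-integration/Fubini technique already used in the proof of Proposition~\ref{prop4.2}, adapted to the first moment rather than to the characteristic function. First I write $\mathbb{E}_{\mathbb{Q}}[Z_k] = \mathbb{E}_{\mathbb{Q}}[\min\{c,R_k\}] - g/n$. Since $R_k>-1$ $\mathbb{Q}$-a.s.\ [cf.\ the discussion after (\ref{eq3.5})], the layer-cake identity
\[
\min\{c,R_k\} = -1 + \int_{0}^{1+c}\mathbf{1}_{\{R_k>w-1\}}\,dw
\]
holds pointwise, and taking $\mathbb{Q}$-expectation via Tonelli reduces the problem to a single integral:
\[
\mathbb{E}_{\mathbb{Q}}[\min\{c,R_k\}] = c - \int_{0}^{1+c}\mathbb{Q}(R_k\le w-1)\,dw.
\]

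Next I substitute $\mathbb{Q}(R_k\le w-1) = \int_{-\infty}^{\ln w} f_{Y_\tau}(u)\,du$ using (\ref{eq3.6}) together with the definition of the distribution function, where $f_{Y_\tau}$ is the Meixner density in (\ref{eq4.6}). Applying Fubini and splitting the outer integral according to the sign of $u$, exactly as in the proof of Proposition~\ref{prop4.2}, yields
\[
\int_{0}^{1+c}\mathbb{Q}(R_k\le w-1)\,dw = (1+c)\int_{-\infty}^{0} f_{Y_\tau}(u)\,du + \int_{0}^{1+c}(1+c-u)\,f_{Y_\tau}(u)\,du.
\]
Expanding the second integral and using the total probability identity $\int_{-\infty}^{1+c} f_{Y_\tau}(u)\,du = 1 - \int_{1+c}^{\infty} f_{Y_\tau}(u)\,du$, the $(1+c)$-terms telescope with $c$ to produce $-1$, so that after subtracting $g/n$ I land exactly on the right-hand side of (\ref{eq4.8}).

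As a sanity check, the same formula can be extracted from Proposition~\ref{prop4.2} via $\mathbb{E}_{\mathbb{Q}}[Z_k] = -i\phi_{Z_k}'(0)$: differentiating (\ref{eq4.5}) at $x=0$, the prefactor $e^{-ix(1+g/n)}$ contributes $-i(1+g/n)$ (the bracket evaluates to $1$ at $x=0$ by total probability), while differentiating the bracket gives $i(1+c)\int_{1+c}^{\infty} f_{Y_\tau}(u)\,du + i\int_{0}^{1+c} u\,f_{Y_\tau}(u)\,du$; dividing by $i$ reproduces (\ref{eq4.8}) at once.

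The only genuine obstacle is the Fubini step: I must carve up the region $\{0<w<1+c,\ -\infty<u<\ln w\}$ consistently with the splitting adopted in the proof of Proposition~\ref{prop4.2}, so that the resulting formula is phrased in terms of the same integrals $\int_{1+c}^\infty f_{Y_\tau}$ and $\int_{0}^{1+c} u\,f_{Y_\tau}(u)\,du$ that already appear in (\ref{eq4.5}). Once the region decomposition is in place, the remaining steps are routine rearrangements.
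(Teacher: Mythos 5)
Your ``sanity check'' is in fact the paper's entire proof: the paper proves Proposition~\ref{prop4.3} by invoking the moment identity (\ref{eq4.9}), $\mathbb{E}_{\mathbb{Q}}[Z_{k}]=\frac{1}{i}\frac{\partial}{\partial x}(\phi_{Z_{k}}(x))\vert_{x=0}$, and substituting (\ref{eq4.5}) --- exactly the differentiation you carry out in your penultimate paragraph, and that computation is correct and complete given Proposition~\ref{prop4.2}. Your primary argument is a genuinely different and more elementary route: the layer-cake identity $\min\{c,R_{k}\}=-1+\int_{0}^{1+c}\mathbf{1}_{\{R_{k}>w-1\}}\,dw$ plus Tonelli and (\ref{eq3.6}) produces the first moment directly from the distribution function, without differentiating a characteristic function; the price is that you must redo the Fubini decomposition that Proposition~\ref{prop4.2} already performed, rather than simply harvesting its output.

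That Fubini step is where you should be careful. The region you name, $\{0<w<1+c,\ -\infty<u<\ln w\}$, does \emph{not} split at $u=0$ into the two pieces you assert: for fixed $u$ the inner variable ranges over $e^{u}<w<1+c$, so Fubini on this region literally gives
\[
\int_{0}^{1+c}\mathbb{Q} ( R_{k}\leq w-1 )\,dw=\int_{-\infty}^{\ln  ( 1+c  )} \bigl( 1+c-e^{u} \bigr) f_{Y_{\tau}} ( u )\,du,
\]
which leads to $\mathbb{E}_{\mathbb{Q}}[Z_{k}]=-1-\frac{g}{n}+(1+c)\int_{\ln (1+c )}^{\infty}f_{Y_{\tau}}(u)\,du+\int_{-\infty}^{\ln (1+c )}e^{u}f_{Y_{\tau}}(u)\,du$, i.e.\ the form appearing in (\ref{eq4.13}). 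Your asserted decomposition $(1+c)\int_{-\infty}^{0}f_{Y_{\tau}}(u)\,du+\int_{0}^{1+c}(1+c-u)f_{Y_{\tau}}(u)\,du$ instead corresponds to the region $\{0<w<1+c,\ u<w\}$; the two coincide only after silently replacing $\ln w$ by $w$ in the inner limit --- the same replacement made between the two displayed Fubini equations in the paper's proof of Proposition~\ref{prop4.2}, which is how (\ref{eq4.5}) and hence (\ref{eq4.8}) acquire the limits $0$ and $1+c$ and the integrand $u$ rather than $\ln(1+c)$ and $e^{u}$. So your direct computation is internally consistent with the printed statements (\ref{eq4.5}) and (\ref{eq4.8}), and your characteristic-function derivation closes the argument relative to Proposition~\ref{prop4.2}; but if the direct route is to stand on its own, you must either carry out the Fubini step on the logarithmic region as above or state explicitly that you are adopting the decomposition of Proposition~\ref{prop4.2} as given.
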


\begin{proof} In accordance to Prop. 2.4 in \cite{4}, we have
\begin{equation}
\mathbb{E}_{\mathbb{Q}} [ Z_{k} ] = \frac{1} {i}
\frac{\partial} {\partial x} \bigl( \phi_{Z_{k}} ( x ) \bigr) \big\rrvert
_{x=0}.\label{eq4.9}
\end{equation}
A substitution of (\ref{eq4.5}) into (\ref{eq4.9}) instantly yields (\ref{eq4.8}). \end{proof}

As mentioned in Section~\ref{sec2}, we recall that the cumulative distribution
function (cdf) of the Meixner-L\'{e}vy process $Y_{t}$, i.e.
\[
F_{Y_{t}} ( x ) \coloneqq \int_{-\infty}^{x}
f_{Y_{t}} ( u ) du
\]
does not possess a closed form representation, but it can be computed
efficiently with numerical methods. Also note that all integrals appearing
in (\ref{eq4.5}) and (\ref{eq4.8}) are finite, since $f_{Y_{\tau}}  (
\boldsymbol{\cdot}  )$ constitutes a probability density function
while the Meixner--L\'{e}vy process $Y_{\tau}$ possesses moments of all
orders (cf. \cite{13}).

\subsection{Cliquet option pricing with Fourier transform
techniques}\label{sec4.2}

There is an alternative method to derive expressions for
$\mathbb{E}_{\mathbb{Q}}  [ Z_{k}  ]$, $\phi_{Z}  ( x  )$
and $C_{0}$ involving Fourier transforms and the L\'{e}vy--Khinchin formula.
In the following, we present this method which has firstly been proposed in
\cite{8} in a cliquet option pricing context.

\begin{proposition}\label{prop4.4} Let $Y_{t} \sim\mathcal{ M}  (
\alpha,\beta,\delta t,  ( \mu+b  ) t  )$ be the
Meixner--L\'{e}vy process considered in (\ref{eq3.5}). Suppose that
$Z_{k} = \min  \{ c, R_{k}  \} - {g} / {n}$ where $k\in
 \{ 1,\dots,n  \}$ and let $\vartheta>0$ be a
finite real-valued dampening parameter. Then the first moment of $Z_{k}$
under $\mathbb{Q}$ can be represented as
\begin{equation}
\mathbb{E}_{\mathbb{Q}} [ Z_{k} ] =c- \frac{g} {n} -
\frac{1}{
2\pi} \int_{\mathbb{R}} \frac{ ( c+1  )^{1+\vartheta+iy}} { (
\vartheta+iy  )  ( 1+\vartheta+iy  )}
\phi_{Y_{\tau}} ( i\vartheta-y ) dy\label{eq4.10}
\end{equation}
where the characteristic function $\phi_{Y_{\tau}}$ is
given by
\begin{equation}
\phi_{Y_{\tau}} ( i\vartheta-y ) = e^{-  ( \vartheta+iy
 )  ( \mu+b  ) \tau} \biggl(
\frac{\cos  ( {\beta} / {2}
 )} {\cosh  ( { ( i  ( \alpha\vartheta-\beta  )
-\alpha y  )} / {2}  )} \biggr)^{2\delta\tau}.\label{eq4.11}
\end{equation}
\end{proposition}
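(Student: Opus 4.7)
The plan is to reduce the first moment of $Z_k$ to the expectation of a put-style payoff on the exponential Meixner--L\'evy process, and then apply the Carr--Madan dampening idea to express that expectation as an inverse Fourier transform of a simple rational kernel against $\phi_{Y_{\tau}}$ evaluated at a complex argument. First I would use $R_k \stackrel{d}{=} e^{Y_{\tau}} - 1$ (cf. (\ref{eq3.4}) and the stationarity of $Y$) and the elementary identity
\[
\min \{ c, e^{y} - 1 \} = c - \bigl( (c+1) - e^{y} \bigr)^{+},
\]
so that $\mathbb{E}_{\mathbb{Q}}[Z_{k}] = c - g/n - \mathbb{E}_{\mathbb{Q}} \bigl[ \bigl( (c+1) - e^{Y_{\tau}} \bigr)^{+} \bigr]$. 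This isolates a put-type quantity with strike $K = c+1$ in the log-price $Y_{\tau}$, which is exactly the object tailored for Fourier inversion.

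Next I would introduce the dampened payoff $\tilde{p}(y) := e^{\vartheta y} (K - e^{y})^{+}$. Since $(K - e^{y})^{+}$ is bounded by $K$ and supported on $y \le \ln K$, the factor $e^{\vartheta y}$ with $\vartheta > 0$ makes $\tilde{p} \in \mathcal{L}^{1}(\mathbb{R})$. A direct computation of
\[
\hat{\tilde{p}}(u) = \int_{-\infty}^{\ln K} e^{(\vartheta + iu)y} \bigl( K - e^{y} \bigr) dy
\]
yields, after evaluating two elementary exponential integrals and combining the resulting partial fractions, the closed form
\[
\hat{\tilde{p}}(u) = \frac{K^{1+\vartheta+iu}}{(\vartheta+iu)(1+\vartheta+iu)},
\]
which is precisely the rational factor appearing in (\ref{eq4.10}). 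Inverting the Fourier transform and invoking Fubini's theorem (justified by the $\mathcal{L}^{1}$-bound on $\hat{\tilde{p}}$ together with the existence of the exponential moment $\mathbb{E}_{\mathbb{Q}}[e^{-\vartheta Y_{\tau}}]$, since $M$ has moments of all exponential orders for $|\vartheta|$ within the strip where $\phi_{Y_{\tau}}$ stays analytic) one obtains
\[
\mathbb{E}_{\mathbb{Q}} \bigl[ \bigl( K - e^{Y_{\tau}} \bigr)^{+} \bigr] = \frac{1}{2\pi} \int_{\mathbb{R}} \hat{\tilde{p}}(y) \, \mathbb{E}_{\mathbb{Q}} \bigl[ e^{-(\vartheta+iy) Y_{\tau}} \bigr] dy.
\]
Recognising $\mathbb{E}_{\mathbb{Q}}[e^{-(\vartheta+iy)Y_{\tau}}] = \phi_{Y_{\tau}}(i\vartheta - y)$ and substituting $K = c+1$ yields (\ref{eq4.10}).

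Finally, to obtain (\ref{eq4.11}) I would extend the explicit characteristic function (\ref{eq2.9}) from real to complex arguments by analytic continuation (permissible inside the strip where $\cosh((\alpha u - i\beta)/2)$ has no zeros), apply it to the Meixner--L\'evy process $Y_{\tau} \sim \mathcal{M}(\alpha, \beta, \delta\tau, (\mu+b)\tau)$ and evaluate at $u = i\vartheta - y$. The algebraic simplifications $iu = -(\vartheta + iy)$ and $\alpha u - i\beta = i(\alpha\vartheta - \beta) - \alpha y$ then yield (\ref{eq4.11}) directly. The main obstacle I anticipate is the rigorous justification of the analytic continuation and the Fubini interchange for the complex-argument characteristic function; this requires checking that $\vartheta$ lies in the analyticity strip of $\phi_{Y_{\tau}}$ (which, by the explicit cosh-form, amounts to avoiding the poles of $\cosh((i(\alpha\vartheta - \beta) - \alpha y)/2)$ in $y \in \mathbb{R}$, i.e. excluding the critical values of $\alpha\vartheta - \beta$) and that the combined integrand is absolutely integrable. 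Everything else is essentially a routine calculation.
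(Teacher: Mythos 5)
Your proposal is correct and follows essentially the same route as the paper's proof: the decomposition $\min\{c,R_k\}=c-(c-R_k)^+$, passage to $(c+1-e^{Y_\tau})^+$ via (\ref{eq3.4}), the exponentially dampened payoff $e^{\vartheta u}(c+1-e^u)^+$ with its rational Fourier transform, Fubini, and evaluation of (\ref{eq2.9}) at the complex argument $i\vartheta-y$. You supply somewhat more justification (the $\mathcal{L}^1$ bound and the analyticity strip for $\phi_{Y_\tau}$) than the paper, which defers these details to its reference \cite{8}.
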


\begin{proof} The proof follows the same lines as the proof of Prop. 3.4
in \cite{8}. From (\ref{eq4.1}) and the equality
\[
\min \{ c, R_{k} \} =c- [ c- R_{k} ]^{+}
\]
we deduce
\[
\mathbb{E}_{\mathbb{Q}} [ Z_{k} ] =c- {g} / {n} -
\mathbb{E}_{\mathbb{Q}} \bigl[ ( c- R_{k} )^{+} \bigr].
\]
Taking (\ref{eq3.4}) into account, we next receive
\[
\mathbb{E}_{\mathbb{Q}} [ Z_{k} ] =c- {g} / {n} -
\mathbb{E}_{\mathbb{Q}} \bigl[ \bigl( c+1- e^{Y_{\tau}} \bigr)^{+}
\bigr]
\]
where $\tau= t_{k} - t_{k-1}$ and $Y$ is the real-valued Meixner--L\'{e}vy
process given in (\ref{eq3.5}). With a finite and real-valued dampening parameter
$\vartheta>0$ we define the function
\[
\varphi ( u ) \coloneqq e^{\vartheta u} \bigl( c+1- e^{u}
\bigr)^{+}.
\]
Since $\varphi\in \mathcal{L}^{1}  ( \mathbb{R}  )$, its Fourier
transform exists and reads as
\[
\hat{\varphi} ( y ) = \frac{ ( c+1  )^{1+\vartheta+iy}}{
 ( \vartheta+iy  )  ( 1+\vartheta+iy  )}.
\]
Using the inverse Fourier transform along with Fubini's theorem, we get
\[
\mathbb{E}_{\mathbb{Q}} \bigl[ \bigl( c+1- e^{Y_{\tau}}
\bigr)^{+} \bigr] = \mathbb{E}_{\mathbb{Q}} \bigl[ e^{-\vartheta Y_{\tau}}
\varphi ( Y_{\tau} ) \bigr] = \frac{1} {2\pi} \int_{\mathbb{R}}
\hat{\varphi} ( y ) \mathbb{E}_{\mathbb{Q}} \bigl[ e^{-  ( \vartheta+iy
 ) Y_{\tau}} \bigr] dy
\]
which implies (\ref{eq4.10}). The expression for the characteristic function
$\phi_{Y_{\tau}}$ given in (\ref{eq4.11}) can directly be obtained by virtue of
(\ref{eq2.9}). \end{proof}

Our argumentation in the proof of Proposition~\ref{prop4.4} motivates the following
considerations.

\begin{proposition}\label{prop4.5} Let $Y_{t} \sim\mathcal{ M}  (
\alpha,\beta,\delta t,  ( \mu+b  ) t  )$ be the
Meixner--L\'{e}vy process presented in (\ref{eq3.5}). Suppose that
$Z_{k} = \min  \{ c, R_{k}  \} - {g} / {n}$ with $k\in
 \{ 1,\dots,n  \}$ and $c\geq0$. Then the
characteristic function of $Z_{k}$ under $\mathbb{Q}$
reads as
\begin{equation}
\phi_{Z_{k}} ( x ) = e^{-ix {g} / {n}} \Biggl( e^{ixc} + \int
_{-\infty}^{\ln  ( 1+c  )} \bigl[ e^{ix  ( e^{u} -1
 )} - e^{ixc}
\bigr] f_{Y_{\tau}} ( u ) du \Biggr)\label{eq4.12}
\end{equation}
where the probability density function $f_{Y_{\tau}}$ of $Y_{\tau}$
under $\mathbb{Q}$ is such as given in (\ref{eq4.6}).
\end{proposition}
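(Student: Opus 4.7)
The plan is to compute $\phi_{Z_k}(x) = \mathbb{E}_{\mathbb{Q}}[e^{ixZ_k}]$ directly from the definition (\ref{eq4.1}) by conditioning on the event that separates the two regimes of the min-operator, then expressing everything through the density $f_{Y_\tau}$ given in (\ref{eq4.6}).

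First I would factor out the deterministic part: since $Z_k = \min\{c, R_k\} - g/n$ is $g/n$ shifted, one immediately gets
\[
\phi_{Z_k}(x) = e^{-ixg/n}\,\mathbb{E}_{\mathbb{Q}}\bigl[e^{ix\min\{c,R_k\}}\bigr].
\]
Next I would use the representation (\ref{eq3.4}), namely $R_k = e^{Y_\tau} - 1$ in distribution (using stationarity of increments of the Meixner--Lévy process $Y$ under $\mathbb{Q}$, as noted in Section~\ref{sec3}), so that the event $\{R_k \geq c\}$ coincides with $\{Y_\tau \geq \ln(1+c)\}$ and on the complementary event $\min\{c,R_k\} = e^{Y_\tau} - 1$.

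The key step is then a case split according to whether $Y_\tau$ lies above or below $\ln(1+c)$:
\[
\mathbb{E}_{\mathbb{Q}}\bigl[e^{ix\min\{c,R_k\}}\bigr] = e^{ixc}\,\mathbb{Q}\bigl(Y_\tau \geq \ln(1+c)\bigr) + \int_{-\infty}^{\ln(1+c)} e^{ix(e^u - 1)} f_{Y_\tau}(u)\,du.
\]
Using $\mathbb{Q}(Y_\tau \geq \ln(1+c)) = 1 - \int_{-\infty}^{\ln(1+c)} f_{Y_\tau}(u)\,du$ to rewrite the probability as $e^{ixc}$ minus an integral over $(-\infty,\ln(1+c))$, and combining the two integrals into a single one produces the bracketed integrand $[e^{ix(e^u-1)} - e^{ixc}]$. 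Reinstating the factor $e^{-ixg/n}$ yields exactly (\ref{eq4.12}).

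The computation is essentially routine; the only thing to verify is integrability, which is immediate since the integrand is bounded in modulus by $2 f_{Y_\tau}(u)$ on a set of finite measure (in fact on all of $\mathbb{R}$), so no obstacle arises from interchanging expectation and integration. Thus there is no real difficulty; the main care is simply in bookkeeping the splitting of unity over the two regimes so that the final single-integral form appears correctly.
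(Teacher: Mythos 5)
Your proof is correct and is the natural direct computation: factor out $e^{-ixg/n}$, split the expectation on $\{Y_\tau \geq \ln(1+c)\}$ versus its complement using $R_k \cong e^{Y_\tau}-1$, and absorb the tail probability into the integral via $\mathbb{Q}(Y_\tau \geq \ln(1+c)) = 1 - \int_{-\infty}^{\ln(1+c)} f_{Y_\tau}(u)\,du$, which produces exactly the bracketed integrand in (\ref{eq4.12}). The paper itself only defers to Prop.~3.5 of \cite{8} for this step, and your argument is the self-contained version of that same computation.
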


\begin{proof} Similar computations as in the proof of Prop. 3.5 in \cite{8}
yield (\ref{eq4.12}). \end{proof}

There is an alternative method involving (\ref{eq4.9}) to derive an expression for
$\mathbb{E}_{\mathbb{Q}}  [ Z_{k}  ]$ which is presented in the
following.

\begin{corollary}\label{cor4.1} In the setup of Proposition~\ref{prop4.5},
we receive the representation
\begin{equation}
\mathbb{E}_{\mathbb{Q}} [ Z_{k} ] =c- \frac{g} {n} + \int
_{-\infty}^{\ln  ( 1+c  )} \bigl[ e^{u} -1-c \bigr]
f_{Y_{\tau}} ( u ) du.\label{eq4.13}
\end{equation}
\end{corollary}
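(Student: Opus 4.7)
The plan is to apply formula (\ref{eq4.9}), i.e. $\mathbb{E}_{\mathbb{Q}}[Z_k]=\frac{1}{i}\partial_x\phi_{Z_k}(x)|_{x=0}$, directly to the representation (\ref{eq4.12}) for $\phi_{Z_k}$ obtained in Proposition~\ref{prop4.5}. Thus the whole argument reduces to a single differentiation under the integral sign followed by evaluation at the origin.

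More concretely, I would proceed in three short steps. First, I would differentiate the right-hand side of (\ref{eq4.12}) using the product rule applied to the prefactor $e^{-ixg/n}$ and the bracketed term $e^{ixc}+\int_{-\infty}^{\ln(1+c)}[e^{ix(e^u-1)}-e^{ixc}]f_{Y_\tau}(u)\,du$, bringing $\partial_x$ inside the integral. Second, I would evaluate at $x=0$; then $e^{-ixg/n}\to 1$, $e^{ixc}\to 1$, and the integrand of the bracketed integral vanishes, so the contribution of the prefactor's derivative collapses to $-ig/n$, while the derivative of the bracket yields $ic+i\int_{-\infty}^{\ln(1+c)}[(e^u-1)-c]f_{Y_\tau}(u)\,du$. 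Third, dividing by $i$ produces exactly the right-hand side of (\ref{eq4.13}).

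The only subtle point is the justification for swapping $\partial_x$ and the integral. This is where I would invoke dominated convergence: the integrand in (\ref{eq4.12}) is bounded uniformly in $x$ on any compact neighborhood of $0$ by $C(|e^u-1|+c)$ for some constant $C$, and since $Y_\tau$ has moments of all orders (as recalled in Section~\ref{sec2}, cf. \cite{13}) together with the fact that $f_{Y_\tau}$ is a probability density, the dominating function $(|e^u-1|+c)f_{Y_\tau}(u)$ is integrable on $(-\infty,\ln(1+c)]$ (noting that $e^u\le 1+c$ throughout the domain of integration, which keeps $|e^u-1|$ bounded by $c+1$). This legitimizes the interchange, so no genuine obstacle arises. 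The computation is then purely mechanical and matches (\ref{eq4.13}).
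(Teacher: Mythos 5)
Your proposal is correct and carries out precisely the method the paper announces immediately before the corollary, namely applying the moment formula (\ref{eq4.9}) to the representation (\ref{eq4.12}) and evaluating the derivative at $x=0$; the computation, including the collapse of the prefactor's contribution to $-ig/n$ and the dominated-convergence justification for differentiating under the integral, is sound and yields (\ref{eq4.13}) exactly. The paper's own proof merely defers to Eq.\ (3.16) in \cite{8}, so your argument is a self-contained version of essentially the same derivation rather than a genuinely different route.
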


\begin{proof} The claimed representation immediately follows from Eq.
(3.16) in \cite{8}. \end{proof}

Inspired by the Fourier transform techniques applied in the proof of
Proposition~\ref{prop4.4}, we now focus on the derivation of an alternative
representation for the cliquet option price $C_{0}$ given in (\ref{eq4.2}). The
corresponding result reads as follows.

\begin{theorem}[Fourier transform cliquet option price]\label{thm4.1} Let
$k\in  \{ 1,\dots,n  \}$ and consider the independent and
identically distributed random variables $Z_{k} \,{=}\, \min  \{ c, R_{k}
 \} - {g} / {n}$ where $c\geq0$ is the local cap,
$g$ is the guaranteed rate at maturity and $R_{k}$ is the
return process defined in (\ref{eq3.4}). For $n \in\mathbb{N}$
we set $\varrho\coloneqq nc-g$ and denote the maturity
time by $T$, the notional by $K$ and the riskless
interest rate by $r$. Let $Y_{t} \sim\mathcal{ M}  (
\alpha,\beta,\delta t,  ( \mu+b  ) t  )$ be the
Meixner--L\'{e}vy process given in (\ref{eq3.5}). Then the price at
time zero of a cliquet option paying
\[
H_{T} =K \Biggl( 1+g+ \max \Biggl\{ 0, \sum
_{k=1}^{n} Z_{k} \Biggr\} \Biggr)
\]
at maturity can be represented as
\begin{align}
C_{0} &=K e^{-rT} \Biggl[ 1+g+ \int_{0^{ +}}^{\infty}
\frac{1+iy\varrho-
e^{iy\varrho}} {2\pi y^{2}}\notag \\
&\quad\times\Biggl( 1+ \int_{-\infty}^{\ln  ( 1+c
 )} \bigl[
e^{iy  ( e^{u} -1-c  )} -1 \bigr] f_{Y_{\tau}} ( u ) du \Biggr)^{n} dy
\Biggr]\label{eq4.14}
\end{align}
where $f_{Y_{\tau}}  ( u  )$ constitutes the
probability density function claimed in (\ref{eq4.6}).
\end{theorem}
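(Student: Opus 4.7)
The plan is to start from Proposition~\ref{prop4.1}, which gives $C_0 = K e^{-rT}(1 + g + \mathbb{E}_{\mathbb{Q}}[S^+])$ with $S \coloneqq \sum_{k=1}^n Z_k$, so the task reduces to producing a Fourier representation for $\mathbb{E}_{\mathbb{Q}}[S^+]$. The decisive observation is the almost sure upper bound $S \leq \varrho = nc - g$: indeed $Z_k = \min\{c, R_k\} - g/n \leq c - g/n$ for every $k$. Setting $U \coloneqq \varrho - S \geq 0$, this converts $S^+ = (\varrho - U)^+$ into a bounded, compactly supported payoff in $U$, unlocking the Fourier machinery.

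The next step is to compute the Fourier transform of the truncated triangle
\[
G(u) \coloneqq (\varrho - u)^+ \mathbf{1}_{\{u \geq 0\}},
\]
which by a direct integration by parts equals
\[
\hat{G}(y) = \int_0^\varrho (\varrho - u) e^{iyu}\, du = \frac{1 + iy\varrho - e^{iy\varrho}}{y^2},
\]
precisely the kernel of (\ref{eq4.14}). Applying Fourier inversion to $G$, swapping the $du$- and $dy$-integrals via Fubini, and using $\mathbb{E}_{\mathbb{Q}}[e^{-iyU}] = \phi_{S-\varrho}(y)$ would yield
\[
\mathbb{E}_{\mathbb{Q}}\bigl[S^+\bigr] = \mathbb{E}_{\mathbb{Q}}\bigl[G(U)\bigr] = \frac{1}{2\pi} \int_{\mathbb{R}} \hat{G}(y)\, \phi_{S - \varrho}(y)\, dy.
\]

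The remaining combinatorial work is to factor $\phi_{S-\varrho}(y) = \phi_{W_1}(y)^n$ for the i.i.d.\ random variables $W_k \coloneqq Z_k + g/n - c = \min\{R_k - c, 0\}$, and to compute $\phi_{W_1}(y)$ by conditioning on $\{R_1 \geq c\}$ versus $\{R_1 < c\}$ with the help of (\ref{eq3.6}) and (\ref{eq4.6}); this produces exactly the bracketed $n$-th power
\[
\Bigl(1 + \int_{-\infty}^{\ln(1+c)} \bigl[e^{iy(e^u - 1 - c)} - 1\bigr] f_{Y_\tau}(u)\, du\Bigr)^n
\]
appearing in (\ref{eq4.14}). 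The Hermitian symmetry $\hat{G}(-y)\phi_{S-\varrho}(-y) = \overline{\hat{G}(y)\phi_{S-\varrho}(y)}$ then allows me to fold the integration range from $\mathbb{R}$ down to $(0,\infty)$, after which combination with $C_0 = Ke^{-rT}(1+g+\mathbb{E}_{\mathbb{Q}}[S^+])$ delivers (\ref{eq4.14}).

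The principal technical obstacle is the rigorous justification of Fubini in the Fourier inversion step: the kernel $\hat{G}(y)$ decays only like $|y|^{-1}$, while $\phi_{W_1}$ need not vanish at infinity because $W_k$ carries a point mass at $0$ arising from the event $\{R_k \geq c\}$. I would circumvent this either by inserting a dampening factor $e^{-\vartheta|\cdot|}$ in the style of the proof of Proposition~\ref{prop4.4} and passing to the limit $\vartheta \downarrow 0$, or, as the text itself suggests at the beginning of Section~\ref{sec4}, by directly specialising the general cliquet pricing theorem of \cite{8} to the Meixner--L\'evy setting via the characteristic function (\ref{eq2.9}) and the density (\ref{eq4.6}) already identified in Proposition~\ref{prop4.2}.
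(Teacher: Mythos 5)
Your reconstruction is almost certainly the argument behind the formula: the kernel $\frac{1+iy\varrho-e^{iy\varrho}}{2\pi y^{2}}$ is indeed $\frac{1}{2\pi}\hat{G}(y)$ for $G(u)=(\varrho-u)^{+}\mathbf{1}_{\{u\geq0\}}$, the bracketed $n$-th power is exactly $\phi_{W_{1}}(y)^{n}$ for $W_{k}=\min\{c,R_{k}\}-c$, and all of these computations check out. Note, however, that the paper itself offers no proof to compare against: it consists of the single sentence that the proof of Theorem 3.7 in \cite{8} applies with $f_{X_{\tau}}$ replaced by $f_{Y_{\tau}}$, which is precisely your second fallback option. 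So you have supplied strictly more detail than the paper, and the structural part of your sketch is sound.

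There is, however, a genuine gap in the analytic part, and your proposed workaround does not close it. The random variable $U=\varrho-S$ carries an atom at $0$ of mass $\mathbb{Q}(R_{1}\geq c)^{n}>0$ (all local caps binding), and this atom sits exactly at the discontinuity of $G$, which jumps from $0$ to $\varrho$ at $u=0$. Since $\hat{G}\notin\mathcal{L}^{1}(\mathbb{R})$, the inversion $\frac{1}{2\pi}\int_{\mathbb{R}}\hat{G}(y)e^{-iyu}\,dy$ must be read as an improper/principal-value integral, and at the jump it returns the midpoint $\varrho/2$ rather than $G(0)=\varrho$; consequently your chain of identities evaluates $\mathbb{E}_{\mathbb{Q}}[S^{+}]-\frac{\varrho}{2}\mathbb{Q}(R_{1}\geq c)^{n}$ rather than $\mathbb{E}_{\mathbb{Q}}[S^{+}]$, and a dampening factor taken to the limit $\vartheta\downarrow0$ converges to the same midpoint value. (This is invisible in Proposition \ref{prop4.4}, where the damped payoff $e^{\vartheta u}(c+1-e^{u})^{+}$ is continuous.) Relatedly, your final folding step does not literally deliver \eqref{eq4.14}: Hermitian symmetry yields $\frac{1}{\pi}\int_{0}^{\infty}\operatorname{Re}\bigl(\hat{G}(y)\phi_{W_{1}}(y)^{n}\bigr)dy$, whereas \eqref{eq4.14} has $\frac{1}{2\pi}\int_{0^{+}}^{\infty}$ of the complex integrand; these cannot be identified by symmetry alone, because $\phi_{W_{1}}(y)^{n}\to\mathbb{Q}(R_{1}\geq c)^{n}\neq0$ as $y\to\infty$ (exactly the point mass you noticed), so the imaginary part of the integrand decays only like $\varrho\,\mathbb{Q}(R_{1}\geq c)^{n}/y$ and is not integrable over $(0,\infty)$. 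A complete proof must either track the atom correction and the real-part/one-sided-range bookkeeping explicitly, or retreat to the paper's route of quoting \cite{8} verbatim.
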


\begin{proof} The proof of Theorem 3.7 in \cite{8} here applies equally, if we
replace $f_{X_{\tau}}$ therein by $f_{Y_{\tau}}$. \end{proof}

\section{Conclusions}\label{sec5}

In this paper, we investigated the pricing of a monthly sum cap style
cliquet option with underlying stock price modeled by a geometric pure-jump
Meixner--L\'{e}vy process. In Section~\ref{sec2}, we compiled various facts on the
Meixner distribution and the related class of stochastic Meixner--L\'{e}vy
processes. In Section~\ref{sec3}, we introduced a stock price model driven by a
Meixner--L\'{e}vy process and established a customized structure preserving
measure change from the risk-neutral to the physical probability measure.
Moreover, we obtained semi-analytic expressions for the cliquet option
price by using the probability distribution function of the driving
Meixner--L\'{e}vy process in Section~\ref{sec4.1} and by an application of Fourier
transform techniques in Section~\ref{sec4.2}. To read more on cliquet option pricing
in a jump-diffusion L\'{e}vy model, the reader is referred to the
accompanying article \cite{8}.





\end{document}